\documentclass[11pt, oneside]{article}   

% Load packages & macros
% Packages
\usepackage {geometry}               	
\geometry{letterpaper}                   	
\usepackage{graphicx}				
\usepackage{tikz}
\usepackage{titling}
\usepackage{float}
\usepackage[pdftex,colorlinks=true,linkcolor=blue,citecolor=blue,urlcolor=black]{hyperref}						
\usepackage{thmtools,thm-restate}								
\usepackage{amssymb}
\usepackage{amsmath}
\usepackage{amsthm}
\usepackage{braket}
\usepackage{xcolor}

% Bib style
\bibliographystyle{plainurl}

% Theorems
\newtheorem{theorem}{Theorem}[section]
\newtheorem{lemma}[theorem]{Lemma}
\newtheorem{corollary}[theorem]{Corollary}
\newtheorem{definition}[theorem]{Definition}

\newtheorem{observation}[theorem]{Observation}
\newtheorem{conjecture}[theorem]{Conjecture}

% For making notes

% Common math ops

\DeclareMathOperator{\poly}{poly}

% Bra-ket notation
\newcommand{\ketbra}[2]{\left| #1 \rangle\langle #2\right|}
\newcommand{\proj}[1]{\ketbra{#1}{#1}}

\newcommand{\bit}{\{0,1\}}

% Complexity classes
\newcommand{\BPP}{\mathsf{BPP}}

\newcommand{\BQP}{\mathsf{BQP}}
\newcommand{\PBQP}{\mathsf{PostBQP}}

\newcommand{\DQC}{\mathsf{DQC1}}
\newcommand{\PDQC}{\mathsf{PostDQC1}}

\newcommand{\sep}{\mathsf{sep}}
\newcommand{\PostBPP}{\mathsf{PostBPP}}

% Shortcut to oracle

%Arrays
\usepackage{array}

\title{The one clean qubit model without entanglement is classically simulable}
\author{Mithuna Yoganathan\thanks{Email: \href{mailto:my332@cam.ac.uk}{my332@cam.ac.uk}}\\
University of Cambridge, UK  \and Chris Cade\thanks{Email: \href{mailto:chris.cade@bristol.ac.uk}{chris.cade@bristol.ac.uk}} \\ University of Bristol, UK}
\date{}	

\begin{document}
\maketitle

% Commands specific to current work
\newcommand{\CU}[3]{U^{\mathcal{#1}}_{#2, #3}}
\newcommand{\UC}[3]{U^{#2, #3}_{\mathcal{#1}}}
\newcommand{\ketp}[1]{\ket{{#1}^\perp}}
\newcommand{\unitary}[4]{\begin{pmatrix} #1 & #2 \\ #3 & #4 \end{pmatrix}}

%%%%%%%%%%%%%%%%%%%%%%%%%%%%%%%%%%%%%%%%%%%%%%%%%%

\begin{abstract}
Entanglement has been shown to be necessary for pure state quantum computation to have an advantage over classical computation. However, it remains open whether entanglement is necessary for quantum computers that use mixed states to also have an advantage. The one clean qubit model is a form of quantum computer in which the input is the maximally mixed state plus one pure qubit. Previous work has shown that there is a limited amount of entanglement present in these computations, despite the fact that they can efficiently solve some problems that are seemingly hard to solve classically. This casts doubt on the notion that entanglement is necessary for quantum speedups. In this work we show that entanglement is indeed crucial for efficient computation in this model, because without it the one clean qubit model is efficiently classically simulable. 
\end{abstract}

\section{Introduction}

Entanglement is often conjectured to be the source of the speedups achieved by quantum computers. This claim is supported by the results of Jozsa and Linden \cite{jozsa2003role} and Vidal \cite{vidal2003efficient}, which showed that pure state computations with only small amounts of entanglement can be efficiently classically simulated. Therefore entanglement is clearly necessary for pure state quantum computation. However, it has not been shown that separable mixed state computations are classically simulable. Deciding this is one of the questions posed in the \textit{``Ten Semi-Grand Challenges for Quantum Computing Theory''} raised by Aaronson in 2005 \cite{aaronson2005ten}. 

If mixed state quantum computers can obtain some advantage without entanglement, this would imply that entanglement is not the only source of quantum advantage. Such a result would be plausible even though it does not hold in the pure case. Pure states with restricted entanglement can be described very efficiently, while for mixed states without entanglement this is not obviously the case. 

One possible reason that entanglement may be necessary for pure, but not mixed, states is that entanglement is the only type of correlation present in the former, but not in the latter. Perhaps it is these other correlations that are important in a quantum computer, and not just entanglement itself. Indeed, Vidal's \cite{vidal2003efficient} algorithm can only efficiently simulate mixed state computations when the total correlations (entanglement as well as classical correlations) are restricted. 

Another argument against entanglement being the only resource responsible for quantum advantages arises from studying the One Clean Qubit model, whose corresponding complexity class is known as $\DQC$ (Deterministic quantum computation with one clean qubit). This model is one of the few known nontrival mixed state quantum computers. The input state to this computer is one pure (or `clean') qubit and $n$ qubits in the maximally mixed state \cite{knill1998power}. Any polynomial-sized quantum circuit can be applied to these qubits, after which the initially clean qubit is measured. Despite how unresourceful the initial state appears to be, $\DQC$ can perform several tasks seemingly exponentially faster than is classically possible, such as the following: computing the normalised trace of a unitary \cite{shor2007estimating}; estimating a coefficient in the Pauli decomposition of a quantum circuit up to polynomial accuracy \cite{knill1998power}; computing Schatten $p$-norms up to a suitable level of accuracy \cite{cade2017quantum}; and computing the trace closure of a Jones polynomial\cite{shor2007estimating}. The power of the class remains the same even if we allow more than one clean qubit \cite{shor2007estimating} -- that is, $\DQC = \mathsf{DQCk}$ for $k = O(\log(n))$ -- suggesting that the initial state is indeed more resourceful than it first appears to be. Furthermore, several results have shown that $\DQC$ cannot be classically simulated under some complexity theoretic conjectures \cite{fujii2015power, morimae2014hardness, morimae2017hardness}. 

This is all despite the fact that $\DQC$ does not require entanglement to be present between the clean qubit and the maximally mixed register in order to demonstrate an advantage over classical computation \cite{poulin2004exponential}. Moreover, the amount of entanglement in the computer, as measured by the multiplicative negativity, is always bounded by a constant independent of the number of qubits \cite{datta2005entanglement}. $\DQC$ can still have strong correlations besides entanglement \cite{datta2007role}, however, meaning that it cannot be simulated by the algorithm in Ref. \cite{vidal2003efficient}. For these reasons, entanglement was postulated to not be vital for computational speedups achieved by the one clean qubit model \cite{datta2007role}. 

In this work we resolve this question by showing that entanglement is in fact necessary in $\DQC$. Any circuit from $\DQC$ that does not generate entanglement can be efficiently classically simulated. We prove this by characterising these circuits. We consider two ways of enforcing that the circuit produce no entanglement; requiring the states to be seperable after each gate applied, or also requiring the state to be seperable throughout the entire computation. Even though circuits produced in either case are surprisingly nontrivial, we show that a classical simulation of them can be performed.

Our result shows that despite the limited amount of it present in $\DQC$, entanglement is playing a crucial role in the quantum speedups achieved by the one clean qubit model. This suggests that entanglement may be necessary in other mixed state quantum computers after all.

%%%%%%%%%%%%%%%%%%%%%%%%%%%%%%%%%%%%%%%%%%%%%%%%

\section{The one clean qubit model}\label{sec:DQC1_prelims}
The \textbf{one clean qubit model} of computation has input state $|+\rangle\langle +|\otimes \mathbb{I}/2^n$, where $|+\rangle=(|0\rangle +|1\rangle)/\sqrt{2}$. Then the clean qubit is used to apply a controlled $n$ qubit unitary $U$ on the mixed register. Note that $U$ must be constructed from a polynomial number of constant size gates\footnote{If $U= U_m...U_1$, then control $U$ can be constructed by applying control $U_1$, control $U_2$ etc.}. Finally, the the originally clean qubit is measured in either the Pauli $X$ or $Y$ basis. It is also possible to define the one clean qubit model so as to allow a circuit to be applied to all the qubits. It can be shown that both definitions give rise to the same complexity class (see e.g. Ref \cite{shor2007estimating} for a constructive proof of this fact) and so we will restrict our attention to the above formulation throughout this work. 
To be explicit: when we refer to the one clean qubit model, we mean a circuit of the form $\proj{0}\otimes \mathbb{I} + \proj{1}\otimes U$ is applied to the state $|+\rangle\langle +|\otimes \mathbb{I}/2^n$, and the first qubit is measured in the $X$ or $Y$ basis. 

To understand why this model can compute the normalised trace, let us consider the decomposition of the maximally mixed state 
\begin{equation}
\mathbb{I}/2^n= \sum_i \frac{1}{2^n} \proj{u_i},
\end{equation}
where $\{|u_i\rangle\}_i$ is an orthonormal basis formed of eigenvectors of $U$.

For convenience we will use the notation $\{p(i), |\psi_i\rangle\}_i$, where $p$ is a probability distribution, to denote the mixed state $\sum_i p(i) |\psi_i\rangle\langle \psi_i|$. This notation highlights that the state can be thought of as a probabilistic mixture of pure states, but that the ensemble is generally not unique. 

In this case the above equation implies that the initial state can be considered to be $\{\frac{1}{2^n}, |+\rangle |u_i\rangle\}_i$. Let $\lambda_i$ be the eigenvalue of $|u_i\rangle$. Under the action of controlled $U$ the state goes to
\begin{equation} \label{eq:final}
|+\rangle |u_i\rangle \rightarrow \frac{|0\rangle|u_i\rangle+\lambda|1\rangle|u_i\rangle}{\sqrt{2}}= \frac{|0\rangle + \lambda_i|1\rangle}{\sqrt{2}}|u_i\rangle.
\end{equation}

If this (pure) state was measured in the $X$ (or $Y$) basis the expectation value would be $\textrm{Re}(\lambda_i)$ (or $\textrm{Im}(\lambda_i)$). Because we must average over all eigenvectors in the basis, the actual quantity that is estimated this way is $\sum_i \lambda_i /2^n=\textrm{Tr}(U)/2^n$. This quantity is called the normalised trace, and the above method estimates it to inverse polynomial additive error. This does not allow us to compute the trace itself very accurately, but nevertheless appears to be a difficult quantity to compute classically.

We can use this model to define the class of decision problems \textbf{$\DQC$} -- the class of decision problems that can be decided correctly with probability $1/2+\epsilon$ using the one clean qubit model, where $\epsilon$ is at most inverse polynomially small. 

The notion of \emph{efficient classical simulation} (and its shorthand, \emph{classically simulable}) that we use in this work is the following: for some uniform family of quantum circuits $\mathcal{F}_n$ acting on the $n+1$-qubit state $\rho := \proj{+} \otimes \frac{\mathbb{I}}{2^n}$, we say that the family $\mathcal{F}_n$ can be efficiently classically simulated if, for any circuit from $\mathcal{F}_n$, we can estimate the probability $p_{X}(1)$ (or $p_{Y}(1)$) of obtaining outcome $1$ when measuring the on the clean qubit in the $X$ (or $Y$) basis at the end of the circuit up to additive error $1/O(\poly(n))$ in time $O(\poly(n))$.

\section{The one clean qubit model without entanglement}

\begin{definition} \textbf{(Separable mixed state)}
A mixed state is separable on the partition $A|B$ if it is described by an ensemble in this form: $\{p(i), |\psi^i\rangle_A \otimes |\phi^i\rangle_B\}_i$. 
In words, this means the mixed state is separable across $A|B$ if it can be written as a probabilistic mixture of pure states separable across that cut. 
\end{definition}

The following theorem gives our first constraint on the entanglement (or lack thereof) present in the one clean qubit model.
\begin{theorem} \label{thm:noent} \textbf{(From Ref \cite{poulin2004exponential})}
The one clean qubit computations do not have entanglement between the `clean' qubit and the `noisy' register at any point in the computation. 
\end{theorem}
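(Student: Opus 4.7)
The plan is to exhibit an explicit separable ensemble decomposition of the joint state across the clean/noisy cut at every stage of the computation. The key structural observation is that a sequence of controlled gates sharing a common control qubit is itself a controlled gate: if $U = U_m \cdots U_1$ is implemented as a sequence of controlled operations $C_k := \proj{0}\otimes\mathbb{I} + \proj{1}\otimes U_k$, then after the first $t$ steps the overall operation applied so far equals $\proj{0}\otimes \mathbb{I} + \proj{1}\otimes V_t$ with $V_t := U_t \cdots U_1$ a unitary on the noisy register. Hence it suffices to prove the following claim for an arbitrary unitary $V$ on the noisy register: the state produced by applying controlled-$V$ to $\proj{+}\otimes \mathbb{I}/2^n$ is separable across the clean/noisy cut.

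To establish this claim I would diagonalise $V$ in an orthonormal eigenbasis $\{\ket{v_i}\}_i$ with eigenvalues $\mu_i$ and rewrite the maximally mixed state as $\mathbb{I}/2^n = \sum_i 2^{-n} \proj{v_i}$. This realises the initial state as the ensemble $\{2^{-n},\, \ket{+}\otimes \ket{v_i}\}_i$, and the computation already displayed in Eq.~\eqref{eq:final} shows that controlled-$V$ sends each ensemble member to the pure product state $\frac{\ket{0} + \mu_i \ket{1}}{\sqrt{2}} \otimes \ket{v_i}$. The joint state at time $t$ is therefore a convex combination of pure states that are products across the clean/noisy cut, which is exactly the definition of separability. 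The final measurement acts on the clean qubit alone and so cannot create entanglement either.

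The main conceptual step, rather than any technical obstacle, is the shift in viewpoint: one must re-express the maximally mixed state in the eigenbasis of the current effective unitary $V_t$ (a different basis at each intermediate time) rather than in the computational basis, so that the ensemble remains one of product states under the action of controlled-$V_t$. Note that the separable decomposition produced this way is not canonical, which is fine since only its existence is needed to certify separability; and since any decomposition of $U$ into constant-size gates gives rise, at each gate boundary, to some effective $V_t$, the argument applies uniformly to every point in the computation at which the state can be meaningfully discussed.
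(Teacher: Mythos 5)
Your proof is correct and follows essentially the same route as the paper's: re-express $\mathbb{I}/2^n$ in an eigenbasis of the relevant controlled unitary and observe that each ensemble member evolves to a pure product state across the clean/noisy cut, exactly as in Equation~\ref{eq:final}. The only difference is that you explicitly handle intermediate times via the composed unitary $V_t = U_t\cdots U_1$, a point the paper's one-line proof leaves implicit.
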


\begin{proof}
The pure state in equation \ref{eq:final} has no entanglement across this partition. The final state of a one clean qubit computation is a (uniform) probabilistic mixture of these pure states. Hence the state is separable.
\end{proof}

Despite there being no entanglement across this bipartition , there clearly are correlations between the two registers. Also note that there can still be entanglement across other cuts in the one clean qubit model. Therefore we will define the \emph{one clean qubit model without entanglement} as being the one clean qubit model (in the sense described in Section \ref{sec:DQC1_prelims}) with no entanglement across \emph{any} cut during the computation. We will define \textbf{$\DQC_{\sep}$} to be the class of decision problems that can be efficiently decided in this model. Note that the separability condition can be enforced two ways. If the circuit is composed of local gates, separability can be required after each gate is applied. Alternatively, if the gates are applied in a continuous manner, say by applying a Hamiltonian evolution, it is natural to enforce separability at all points in time. In this work we consider both possibilities. Theorem \ref{thm:DQC1sep} refers to the discrete gate case and Theorem \ref{thm:DQC1sepcont} to the continuous case. 

Here we show that the final state of a one clean qubit model computation (in which a unitary $U$ is applied to the mixed register, controlled on the clean qubit) is separable if and only if the unitary $U$ satisfies a particular condition; namely that it must have a separable eigenbasis:

\begin{definition}
\textbf{($U$ has a separable eigenbasis)} We say that an $n$ qubit unitary $U$ has a separable eigenbasis if it is possible to write $U=\sum_i \lambda_i |u_i\rangle\langle u_i|$, where each $|u_i\rangle$ is separable. 
\end{definition}
Note that this does not require all eigenvectors to be separable. If $U$ has degenerate eigenvalues then there are infinitely many eigenbases for $U$. However, only one need be separable to satisfy the above condition.

\begin{theorem} \label{thm:sep}
The final state of a one clean qubit model computation (in which the circuit applied to the qubits is of the form $\proj{0}\otimes \mathbb{I} + \proj{0}\otimes U$) has no entanglement if and only if $U$ has a separable eigenbasis. 
\end{theorem}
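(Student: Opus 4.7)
I would prove both directions using the explicit form $\rho_f = \frac{1}{2^n}\sum_\lambda |\psi_\lambda\rangle\langle\psi_\lambda|\otimes E_\lambda$ of the final state, where $E_\lambda$ is the spectral projector of $U$ onto the $\lambda$-eigenspace and $|\psi_\lambda\rangle = (|0\rangle + \lambda|1\rangle)/\sqrt{2}$. The easy direction ($\Leftarrow$) extends the argument of Theorem~\ref{thm:noent}: given a separable eigenbasis $\{|u_i\rangle\}_i$ of $U$, decomposing $\mathbb{I}/2^n = \sum_i |u_i\rangle\langle u_i|/2^n$ presents $\rho_f$ as the uniform ensemble $\{\tfrac{1}{2^n},\, |\psi_{\lambda_i}\rangle\otimes|u_i\rangle\}_i$. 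Because each $|u_i\rangle$ is fully product, so is each $|\psi_{\lambda_i}\rangle\otimes|u_i\rangle$, and $\rho_f$ is a mixture of fully-product pure states, hence has no entanglement across any cut.

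For the hard direction ($\Rightarrow$), suppose $\rho_f$ is fully separable and fix an ensemble of fully-product states realising it. First compute the support: from the above form one reads off $\mathrm{supp}(\rho_f) = \bigoplus_\lambda |\psi_\lambda\rangle\otimes E_\lambda$. Next, characterise which fully-product states lie in this support. Writing a generic element as $\sum_\lambda |\psi_\lambda\rangle\otimes|x_\lambda\rangle$ with $|x_\lambda\rangle\in E_\lambda$, setting it equal to $|b\rangle\otimes|w\rangle$ with $|b\rangle = b_0|0\rangle + b_1|1\rangle$, and matching the $|0\rangle$ and $|1\rangle$ amplitudes on the clean qubit yields $\sum_\lambda|x_\lambda\rangle = \sqrt{2}\,b_0|w\rangle$ and $\sum_\lambda \lambda|x_\lambda\rangle = \sqrt{2}\,b_1|w\rangle$. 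The second equation is just $U$ applied to the first, so $U|w\rangle = (b_1/b_0)|w\rangle$. Hence $|w\rangle$ is a fully-product eigenvector of $U$ with eigenvalue $\mu := b_1/b_0$, and $|b\rangle \propto |\psi_\mu\rangle$.

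Every element of any fully-product ensemble for $\rho_f$ must therefore have the form $|\psi_\mu\rangle\otimes|w\rangle$ with $|w\rangle$ a fully-product eigenvector of $U$. Projecting onto each eigenvalue block gives $E_\lambda/2^n = \sum_{\beta:\mu_\beta=\lambda}q_\beta|w_\beta\rangle\langle w_\beta|$, so each eigenspace $E_\lambda$ decomposes as a nonnegative combination of rank-one projectors onto its own fully-product eigenvectors. The main obstacle is promoting this to an orthonormal product basis of each $E_\lambda$: such a convex product decomposition does not in general imply an orthonormal product basis (for example the two-qubit symmetric subspace admits such a decomposition but contains no orthonormal basis of product states). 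The extra structure available here is that \emph{every} eigenspace simultaneously has this property, so each $E_\lambda^\perp$ is itself spanned by fully-product vectors. I expect this bilateral separability to force an orthonormal product basis of each $E_\lambda$, for instance via an inductive construction that picks a product eigenvector, orthogonalises within $E_\lambda$, and uses the remaining convex decomposition to ensure a new product eigenvector survives at each step.
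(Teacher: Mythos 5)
Your argument follows essentially the same route as the paper's. The `if' direction is identical. For `only if', you work forwards from the support of the final state $\rho_f=\frac{1}{2^n}\sum_\lambda \proj{\psi_\lambda}\otimes P_\lambda$, whereas the paper pulls each ensemble member back through the controlled-$U$ to a state $\ket{+}\ket{\phi_i}$ in the support of the input and computes $\Tr(\rho_A^2)=\tfrac12+\tfrac12|\langle\phi|U|\phi\rangle|^2$ to force each $\ket{\phi_i}$ to be an eigenvector; both arguments land in the same place, namely that every fully-product ensemble for $\rho_f$ consists of states $\ket{\psi_\mu}\otimes\ket{w}$ with $\ket{w}$ a fully product eigenvector of $U$, and hence that each spectral projector satisfies $P_\lambda/2^n=\sum_{\beta:\mu_\beta=\lambda}q_\beta\proj{w_\beta}$. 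Your version of this step is, if anything, more carefully quantified than the paper's, which argues that ``any ensemble'' becomes entangled when what is really needed (and what you state) is that some product ensemble must exist and that all of its members are then constrained.

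The one genuine gap is the final step you flag yourself: passing from ``each eigenprojector is a positive combination of projectors onto fully product eigenvectors'' to ``each eigenspace admits an \emph{orthonormal} fully product basis'', which is what the definition of a separable eigenbasis (and the downstream structural results, e.g.\ Lemma \ref{lem:single_2_qubit}) actually requires. Your symmetric-subspace example correctly shows that the implication fails for a single subspace in isolation, and your proposed inductive orthogonalisation is a hope rather than a proof. You should be aware, however, that the paper's own proof stops at exactly the same point: it concludes only that the product eigenvectors $\{\ket{u_j}\}_j$ must ``span the full Hilbert space'' and never extracts an orthonormal separable eigenbasis from them, so you have not missed an ingredient that the paper supplies --- you have isolated a step that the paper elides. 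For what it is worth, the step closes immediately whenever the decomposition of $P_\lambda$ has exactly $\mathrm{rank}(P_\lambda)$ terms (comparing $\Tr P_\lambda$ with $\Tr P_\lambda^2$ forces the vectors to be orthonormal), and this settles the two-qubit case; in general the number of terms can exceed the rank, and an additional argument exploiting the fact that \emph{every} eigenspace is simultaneously separably generated is still needed.
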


We prove this theorem in Section \ref{sec:main_proofs}.

\begin{theorem} \label{thm:DQC1sep}
Let $\mathcal{C}$ be a circuit in $\textrm{DQC1}_\textrm{sep}$. In this circuit control $U$ is applied, where $U=U_r...U_1$. Then the following unitaries must have a product eigenbasis: $U_1$, $U_2U_1, U_3U_2U_1$,..., $U_r...U_1$.
\end{theorem}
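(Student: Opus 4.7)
The strategy is to read Theorem \ref{thm:DQC1sep} as an iterated application of Theorem \ref{thm:sep}. Writing $V_k := U_k U_{k-1} \cdots U_1$ for $1 \le k \le r$, the goal is to show that each $V_k$ has a separable eigenbasis.

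First I would observe the telescoping identity
\begin{equation*}
\prod_{j=1}^{k}\bigl(\proj{0}\otimes \mathbb{I} + \proj{1}\otimes U_j\bigr) \;=\; \proj{0}\otimes \mathbb{I} + \proj{1}\otimes V_k,
\end{equation*}
which follows because $(\proj{0}\otimes\mathbb{I}+\proj{1}\otimes A)(\proj{0}\otimes\mathbb{I}+\proj{1}\otimes B) = \proj{0}\otimes\mathbb{I}+\proj{1}\otimes AB$. Hence, starting from the canonical one clean qubit input $\proj{+} \otimes \mathbb{I}/2^n$, the state of the full register immediately after the $k$-th controlled gate of $\mathcal{C}$ coincides exactly with the state that would be produced by a one clean qubit computation in which the single unitary $V_k$ is implemented via the single controlled operation on the right-hand side above.

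Next, I would invoke the definition of $\DQC_\sep$ in the discrete-gate sense: separability across every bipartition is enforced after each controlled gate of the circuit. In particular, the state after the $k$-th controlled gate has no entanglement across any cut. Applying Theorem \ref{thm:sep} to the unitary $V_k$ now yields that $V_k$ must have a separable eigenbasis. Doing this for every $k \in \{1, \dots, r\}$ produces the full list $U_1,\, U_2 U_1,\, \dots,\, U_r \cdots U_1$ claimed in the theorem.

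The main obstacle here is essentially nil: all the genuine content is packaged inside Theorem \ref{thm:sep}, and the present statement is obtained as a corollary by applying that theorem to each partial product $V_k$. The only thing to double check is that the separability in Theorem \ref{thm:sep} is full (across-all-cuts) separability, matching the $\DQC_\sep$ requirement rather than the weaker clean/noisy separability already guaranteed by Theorem \ref{thm:noent}; with that reading, the argument is a clean one-line reduction to Theorem \ref{thm:sep}.
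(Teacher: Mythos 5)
Your proposal is correct and takes essentially the same route as the paper, which likewise obtains the theorem by enforcing separability after each gate and applying Theorem \ref{thm:sep} to each partial product $U_k\cdots U_1$. The telescoping identity for products of controlled unitaries that you spell out is left implicit in the paper but is exactly the right justification.
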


\begin{theorem} \label{thm:DQC1sepcont}
Let $\mathcal{C}$ be a circuit in $\textrm{DQC1}_\textrm{sep}$. Suppose the controlled gate applied after time $t$ is $U(t)$, $U(0)=\mathbb{I}$ and at time $T$ the full gate is applied, $U(T)=U$. Then for all $0\leq t\leq T$, $U(t)$ must have a product eigenbasis. 
\end{theorem}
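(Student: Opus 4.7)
The plan is to reduce Theorem \ref{thm:DQC1sepcont} to Theorem \ref{thm:sep} applied pointwise in time. The key observation is that in the continuous model the clean qubit is never acted on directly: the Hamiltonian generating the evolution has the form $\proj{1}\otimes H(s)$ on the combined (clean qubit, noisy register) system, so the propagator from time $0$ to time $t$ is exactly $\proj{0}\otimes\mathbb{I}+\proj{1}\otimes U(t)$. Consequently, the state of the computer at any intermediate time $t\in[0,T]$ coincides with the final state of the ordinary one clean qubit computation whose controlled unitary is $U(t)$ rather than $U(T)=U$.

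Given this snapshot description, the theorem is essentially a corollary. Fix any $t\in[0,T]$. The hypothesis that $\mathcal{C}$ belongs to the continuous version of $\DQC_{\sep}$ means that the state of the computer has no entanglement across any cut at every time, and in particular at time $t$. Since that state is the final state of a one clean qubit computation with unitary $U(t)$, the ``only if'' direction of Theorem \ref{thm:sep} immediately forces $U(t)$ to have a product eigenbasis. Quantifying over all $t$ yields the theorem.

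As a sanity check, one can also obtain the result from Theorem \ref{thm:DQC1sep} by discretisation: chop $[0,T]$ into steps of length $\Delta t$, define $U_k := U(k\Delta t)\,U((k-1)\Delta t)^\dagger$, and apply Theorem \ref{thm:DQC1sep} to conclude that every prefix product $U_k\cdots U_1 = U(k\Delta t)$ has a product eigenbasis; taking $\Delta t\to 0$ (together with continuity of $U(\cdot)$) recovers the statement for every $t$. This consistency check is reassuring but not strictly necessary.

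The only real subtlety is verifying the snapshot description, i.e.\ making sure that the continuous ``controlled Hamiltonian evolution'' is really generated by something of the form $\proj{1}\otimes H(s)$ so that partial evolutions remain of the controlled form $\proj{0}\otimes\mathbb{I}+\proj{1}\otimes U(t)$. This is the natural reading of the model (``the controlled gate applied after time $t$ is $U(t)$''), and once it is fixed the argument is a one-line invocation of Theorem \ref{thm:sep}.
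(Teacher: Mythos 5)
Your proposal is correct and follows the same route as the paper: the paper's (one-line) proof likewise observes that the state at time $t$ is the final state of a one clean qubit computation with controlled unitary $U(t)$, and then applies the ``only if'' direction of Theorem \ref{thm:sep} at every $t$. Your write-up simply makes explicit the snapshot observation that the paper leaves implicit; the discretisation check is a harmless extra.
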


\begin{proof}
The theorems follow from enforcing the separability condition after each gate (or at every point in time in the continuous version), and applying Theorem \ref{thm:sep}. 
\end{proof}

%%%%%%%%%%%%%%%%%%%%%%%%%%%%%%%%%%%

\section{$\textrm{DQC1}_\textrm{sep}$ circuits}

Though Theorems \ref{thm:DQC1sep} and \ref{thm:DQC1sepcont} characterise the circuits that make up $\DQC_{\sep}$, this characterisation is not explicit. In this section we demonstrate which $1$ and $2$ qubit gates can be used to construct the circuits. We will start with an illustrative example of a $2$ qubit gate that has a product eigenbasis. 

\begin{lemma}\label{flos} 
Gates in the following form have a product eigenbasis:
\begin{equation}
\left[\begin{array}{cc}
B & 0\\
0 & C
\end{array}\right],
\end{equation}
where $B$ and $C$ are $1$ qubit unitaries. This gate applies $B$ to the second qubit if the first qubit is in state $|0\rangle$ and $C$ if the first qubit's state is $|1\rangle$.
\end{lemma}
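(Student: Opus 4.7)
The plan is to exhibit an explicit product eigenbasis for the given gate by diagonalising $B$ and $C$ independently on the second qubit. First I would rewrite the gate in the operator form $G = \proj{0}\otimes B + \proj{1}\otimes C$, which makes it clear that $G$ preserves the block structure induced by the first qubit being in $\ket{0}$ or $\ket{1}$.

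Next I would let $\{\ket{b_1},\ket{b_2}\}$ be an orthonormal eigenbasis of $B$ with eigenvalues $\beta_1,\beta_2$, and $\{\ket{c_1},\ket{c_2}\}$ an orthonormal eigenbasis of $C$ with eigenvalues $\gamma_1,\gamma_2$; such bases exist because $B$ and $C$ are unitary (hence normal, hence unitarily diagonalisable). I would then propose the four candidate vectors
\begin{equation}
\ket{0}\ket{b_1},\quad \ket{0}\ket{b_2},\quad \ket{1}\ket{c_1},\quad \ket{1}\ket{c_2},
\end{equation}
and verify by direct computation that $G(\ket{0}\ket{b_i}) = \beta_i \ket{0}\ket{b_i}$ and $G(\ket{1}\ket{c_j}) = \gamma_j \ket{1}\ket{c_j}$, using that $\proj{1}\ket{0}=0$ and vice versa.

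Finally I would check orthonormality: vectors from the $\ket{0}$-block are orthogonal to those in the $\ket{1}$-block because $\braket{0|1}=0$, and within each block orthonormality follows from the orthonormality of $\{\ket{b_i}\}$ and $\{\ket{c_j}\}$. Since these four orthonormal eigenvectors span $\mathbb{C}^2\otimes \mathbb{C}^2$ and each is manifestly a product state, $G$ admits a product eigenbasis, which is exactly the required condition. There is no real obstacle here — the block-diagonal structure of the gate immediately decouples the diagonalisation problem into two independent one-qubit diagonalisations, so the proof is essentially a verification.
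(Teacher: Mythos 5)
Your proof is correct and follows exactly the same approach as the paper: both exhibit the explicit product eigenbasis $\{\ket{0}\ket{b_1}, \ket{0}\ket{b_2}, \ket{1}\ket{c_1}, \ket{1}\ket{c_2}\}$ built from the eigenvectors of $B$ and $C$. You simply spell out the verification of the eigenvalue equations and orthonormality in more detail than the paper does.
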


\begin{proof}
The product eigenbasis of this unitary is $\{|0\rangle|b\rangle, |0\rangle|b^{\perp}\rangle,|1\rangle|c\rangle, |1\rangle|c^{\perp}\rangle \}$, where $|b\rangle$ and $|b^{\perp}\rangle$ are eigenvectors of $B$ and $|c\rangle$ and $|c^{\perp}\rangle$ are eigenvectors of $C$. 
\end{proof}

We will generalise this example so that the unitaries can be controlled by a basis other than the computational basis: 

\begin{definition} \label{def:control} \textbf{(Basis-controlled unitary)}
A \emph{2-qubit basis-controlled unitary} is a unitary $\CU{A}{B}{C}$ such that, if $\mathcal{A}$ is the basis (for 1 qubit) $\{\ket{a}, \ketp{a}\}$, then
\[
	\CU{A}{B}{C}: \begin{matrix} \quad & \ket{a}\ket{\psi} \mapsto \ket{a}B\ket{\psi} \\ \quad & \ketp{a}\ket{\psi} \mapsto \ketp{a}C\ket{\psi}
	\end{matrix}.
\]
$\CU{A}{B}{C}$ has eigenvectors $\ket{a}\ket{b}, \ket{a}\ketp{b}, \ketp{a}\ket{c}, \ketp{a}\ketp{c}$, where $\mathcal{B} = \{\ket{b},\ketp{b}\}$ and $\mathcal{C} = \{\ket{c}, \ketp{c}\}$ are the eigenbases of $B$ and $C$, respectively. We will draw such a gate as in Figure \ref{fig:CU}. Since these gates will always be 2-qubit gates, we will write $\UC{A}{B}{C}$ to represent a basis-controlled unitary controlled on the second qubit, and acting on the first. 
\end{definition}

\begin{figure}[htbp]
\begin{center}
\includegraphics[scale=1.0]{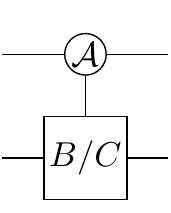}
\caption{A $\CU{A}{B}{C}$ gate.}
\label{fig:CU}
\end{center}
\end{figure}

Some straightforward observations follow. 
\begin{observation} \label{contU}
A continuous version of such a gate can be constructed in the following way. Let $B(t)$ and $C(t)$ be unitary gates such that $B(0)=C(0)=\mathbb{I}$ and $B(T)=B$, $C(T)=C$. Then the continuous version of the basis-controlled unitary is $\CU{A}{B(t)}{C(t)}$, for $0\leq t\leq T$. 
\end{observation}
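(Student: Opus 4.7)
The plan is to verify three things about the family $\CU{A}{B(t)}{C(t)}$: that it correctly interpolates between the identity and the target gate, that it is continuous in $t$, and — most importantly for the separability setting of Theorem \ref{thm:DQC1sepcont} — that it has a product eigenbasis at every intermediate time.

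First I would check the endpoints. At $t=0$, since $B(0)=C(0)=\mathbb{I}$, the definition of $\CU{A}{B(0)}{C(0)}$ sends $\ket{a}\ket{\psi}\mapsto \ket{a}\ket{\psi}$ and $\ketp{a}\ket{\psi}\mapsto\ketp{a}\ket{\psi}$, which is just the identity on both qubits; at $t=T$ we recover $\CU{A}{B}{C}$ by assumption. Continuity in $t$ then reduces to continuity of $B(t)$ and $C(t)$: the map $(B',C')\mapsto \CU{A}{B'}{C'}$ is a linear (hence continuous) function of $B'$ and $C'$, once one fixes the controlling basis $\mathcal{A}$, so continuous paths $B(t),C(t)$ in $U(2)$ yield a continuous path in $U(4)$.

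Next, and this is the content that matters for $\DQC_{\sep}$, I would argue that at every $0\leq t\leq T$ the intermediate gate $\CU{A}{B(t)}{C(t)}$ has a product eigenbasis. This is immediate from Definition \ref{def:control}: letting $\mathcal{B}(t) = \{\ket{b(t)},\ketp{b(t)}\}$ and $\mathcal{C}(t)=\{\ket{c(t)},\ketp{c(t)}\}$ be any eigenbases of $B(t)$ and $C(t)$ respectively, the four product vectors
\[
	\ket{a}\ket{b(t)},\quad \ket{a}\ketp{b(t)},\quad \ketp{a}\ket{c(t)},\quad \ketp{a}\ketp{c(t)}
\]
are eigenvectors of $\CU{A}{B(t)}{C(t)}$ and form a basis of the 2-qubit Hilbert space. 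Combined with Theorem \ref{thm:DQC1sepcont} (or really Theorem \ref{thm:sep} applied pointwise in $t$), this is exactly what is needed to certify that the continuous application of the gate never generates entanglement in the one clean qubit model.

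I do not expect any real obstacle here: the observation is essentially a direct unwrapping of Definition \ref{def:control}. The only place where one might have to be careful is ensuring that such continuous paths $B(t),C(t)$ from $\mathbb{I}$ to $B$ and $C$ exist at all, but this follows from path-connectedness of $U(2)$ (e.g.\ by writing $B = e^{-iH_B}$, $C=e^{-iH_C}$ for Hermitian $H_B,H_C$ and taking $B(t)=e^{-itH_B/T}$, $C(t)=e^{-itH_C/T}$). Hence the observation would be recorded as essentially a definition-check, with the eigenbasis computation above doing the only nontrivial work.
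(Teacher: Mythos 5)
Your proposal is correct: the paper states this as a ``straightforward observation'' and provides no proof at all, and your verification (endpoint check, continuity via linearity of $(B',C')\mapsto \proj{a}\otimes B' + \proj{a^\perp}\otimes C'$, and the product eigenbasis $\ket{a}\ket{b(t)}, \ket{a}\ketp{b(t)}, \ketp{a}\ket{c(t)}, \ketp{a}\ketp{c(t)}$ at each $t$) is exactly the definition-unwrapping the authors intend via Definition \ref{def:control}. Nothing further is needed.
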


\begin{observation}
Suppose we have the control-unitary $\CU{A}{B}{B'}$, and $[B,B']=0$. Then
\begin{itemize}
\item $\CU{A}{B}{B'}$ is diagonal in the $\mathcal{A} \otimes \mathcal{B}$ basis.
\item For $B = \unitary{e^{i\theta_1}}{0}{0}{e^{i\theta_2}}$ and $B' = \unitary{e^{i\phi_1}}{0}{0}{e^{i\phi_2}}$, we have $\CU{A}{B}{B'} = \UC{B}{A}{A'}$, where $A = \unitary{e^{i\theta_1}}{0}{0}{e^{i\phi_1}}$ and $A' = \unitary{e^{i\theta_2}}{0}{0}{e^{i\phi_2}}$.
\end{itemize}
\end{observation}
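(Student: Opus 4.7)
The plan is to verify both bullets by direct computation from Definition \ref{def:control}, using only the fact that commuting unitaries share an eigenbasis.

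For the first bullet, I would begin by noting that $[B,B'] = 0$ implies $B$ and $B'$ admit a common eigenbasis, so we may take $\mathcal{B}' = \mathcal{B} = \{\ket{b},\ketp{b}\}$. Substituting this into Definition \ref{def:control}, the four listed eigenvectors of $\CU{A}{B}{B'}$ collapse to $\ket{a}\ket{b}, \ket{a}\ketp{b}, \ketp{a}\ket{b}, \ketp{a}\ketp{b}$, which is exactly the $\mathcal{A} \otimes \mathcal{B}$ basis. Hence $\CU{A}{B}{B'}$ is diagonal in this basis, with eigenvalues given by the appropriate products of eigenvalues of $B$ and $B'$.

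For the second bullet, the hypothesis that $B$ and $B'$ are written as diagonal matrices fixes $\mathcal{B}$ to be the computational basis. I would work entirely in the $\mathcal{A} \otimes \mathcal{B}$ basis. First I would compute the action of $\CU{A}{B}{B'}$ directly from Definition \ref{def:control}: on the basis states $\ket{a}\ket{0}, \ket{a}\ket{1}, \ketp{a}\ket{0}, \ketp{a}\ket{1}$ the gate produces the phases $e^{i\theta_1}, e^{i\theta_2}, e^{i\phi_1}, e^{i\phi_2}$ respectively. Then, unpacking $\UC{B}{A}{A'}$ (which is controlled on the second qubit in the computational basis, acting on the first by $A$ when that qubit is $\ket{0}$ and by $A'$ when it is $\ket{1}$) and using the given diagonal forms of $A$ and $A'$ in the $\mathcal{A}$ basis, I would check that the phases match entry-by-entry. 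Since the two unitaries agree on a basis, they are equal.

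There is no real obstacle here — both claims amount to picking the right basis in which to write the gate. Conceptually, the second bullet simply records the fact that a two-qubit diagonal unitary admits two equivalent \emph{controlled} descriptions depending on which qubit one chooses to treat as the control, a perspective that will presumably be useful later for building up the $\DQC_{\sep}$ gate set.
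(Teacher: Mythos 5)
Your verification is correct: the paper states this as an unproved ``straightforward observation,'' and your direct entry-by-entry check in the shared eigenbasis $\mathcal{A}\otimes\mathcal{B}$ (using that $[B,B']=0$ forces $\mathcal{C}=\mathcal{B}$ in Definition \ref{def:control}, and that $A,A'$ are diagonal in the $\mathcal{A}$ basis) is exactly the intended argument. Your closing remark---that a two-qubit gate diagonal in a product basis admits a controlled description from either side---is indeed the point the paper exploits later (e.g.\ in Lemma \ref{lem:lem_3}).
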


\begin{observation}
Suppose we have the control-unitary $\CU{A}{B}{B'}$, and $B' = e^{i\theta}B$ for some angle $\theta$. Then
\begin{itemize}
\item $\CU{A}{B}{B'} = \CU{A}{e^{i\phi}B}{e^{i\varphi}B} = A \otimes B$ for $A = \unitary{e^{i\theta}}{0}{0}{e^{i\phi}}$ in the $\mathcal{A}$ basis. 
\end{itemize}
\end{observation}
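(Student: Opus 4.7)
The plan is to verify directly, via the definition of a basis-controlled unitary, that $\CU{A}{B}{B'}$ factorizes as a tensor product whenever $B'$ and $B$ differ only by a global phase. First I would apply Definition~\ref{def:control} to write the action of $\CU{A}{B}{B'}$ on the product basis $\{\ket{a}\ket{\psi}, \ketp{a}\ket{\psi}\}$, obtaining $\ket{a}\ket{\psi} \mapsto \ket{a} \otimes B\ket{\psi}$ and, upon substituting $B' = e^{i\theta}B$, $\ketp{a}\ket{\psi} \mapsto e^{i\theta} \ketp{a} \otimes B\ket{\psi}$. The key observation is that $B$ now acts identically on the target register in both cases, while the control register merely picks up the trivial phase $1$ on $\ket{a}$ and the phase $e^{i\theta}$ on $\ketp{a}$.

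Next I would introduce the single-qubit diagonal operator $A_\theta := \proj{a} + e^{i\theta}\proj{a^\perp}$, which in the $\mathcal{A}$ basis is exactly $\unitary{1}{0}{0}{e^{i\theta}}$. By construction, $A_\theta \otimes B$ reproduces the action computed above on both basis vectors, so by linearity $\CU{A}{B}{B'} = A_\theta \otimes B$. This establishes the tensor-product decomposition in its cleanest form.

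To obtain the more general statement $\CU{A}{e^{i\phi}B}{e^{i\varphi}B} = A\otimes B$ claimed in the observation, I would note that any overall scalar factor can be freely redistributed between the two tensor factors: for any phase $\alpha$, $A_\theta \otimes B = (e^{i\alpha}A_\theta)\otimes(e^{-i\alpha}B)$. Applying this reparametrization shows that the controlled gate where $B$ carries phase $e^{i\phi}$ on the $\ket{a}$-branch and $e^{i\varphi}$ on the $\ketp{a}$-branch is equivalent to $A\otimes B$, where the control-qubit matrix has diagonal entries $e^{i\phi}$ and $e^{i\varphi}$ (and the relative phase $\varphi - \phi$ plays the role of $\theta$). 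The only real obstacle is notational bookkeeping in the statement: the underlying calculation is immediate from Definition~\ref{def:control} once one notes that factoring out $B$ from both branches leaves only diagonal phases on the control.
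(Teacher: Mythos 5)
Your proof is correct; the paper states this as an observation without proof, and your direct computation from Definition \ref{def:control} --- factoring $B$ out of both branches so that only the diagonal phase operator $A_\theta = \proj{a} + e^{i\theta}\proj{a^{\perp}}$ remains on the control --- is exactly the intended justification. You also rightly flag and repair the notational mismatch in the statement (the phases $\theta,\phi,\varphi$ in the displayed matrix do not line up with the phases on the two branches) by redistributing a global phase between the two tensor factors.
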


This shows that these $2$ qubit gates include all $1$ qubit gates as a special case. We will now show that all gates on $2$ qubits that have product eigenbases must be in this form. 

\begin{lemma}\label{lem:single_2_qubit}
Suppose we have a 2-qubit unitary with a product eigenbasis. Then this unitary must be a basis-controlled unitary $\CU{A}{B}{C}$ (or $\UC{A}{B}{C}$) for some choice of $\mathcal{A}, B$, and $C$. 
\end{lemma}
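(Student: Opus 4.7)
The plan is to work directly from orthonormality of the hypothesized product eigenbasis. Write the four eigenvectors as $e_i = \ket{\alpha_i}\ket{\beta_i}$, $i=1,2,3,4$; pairwise orthogonality amounts to the bilinear identity $\langle \alpha_i|\alpha_j\rangle\langle\beta_i|\beta_j\rangle = 0$ for $i\neq j$, so in each pair at least one of the two single-qubit inner products vanishes. My target is the structural dichotomy that either the four $\ket{\alpha_i}$ all lie (up to phase) in an orthogonal pair $\{\ket{a},\ketp{a}\}$, or the four $\ket{\beta_i}$ all lie in an orthogonal pair $\{\ket{b},\ketp{b}\}$. Once that is established, grouping the $e_i$ by their control label and reading off the action on the other register writes $U$ as $\proj{a}\otimes B + \proj{a^\perp}\otimes C$ (or the mirror form with the two qubits' roles swapped), which is exactly $\CU{A}{B}{C}$ or $\UC{A}{B}{C}$; the eigenbases of $B$ and $C$ are then the $\ket{\beta_i}$'s attached to each control value.

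The dichotomy itself I will prove by splitting on the number of projectively distinct directions among $\{\ket{\alpha_i}\}$. Three or more coinciding $\ket{\alpha_i}$ is immediately impossible: it would force three mutually orthogonal $\ket{\beta_i}$ in a two-dimensional space. If exactly two distinct $\ket{\alpha_i}$-directions appear, they must be orthogonal, for otherwise (say $\ket{\alpha_1}=\ket{\alpha_2}$ and $\ket{\alpha_3}=\ket{\alpha_4}$ with $\ket{\alpha_1}\not\perp\ket{\alpha_3}$) the orthogonality constraints force each of $\ket{\beta_3},\ket{\beta_4}$ to be orthogonal to both $\ket{\beta_1}$ and $\ket{\beta_2}$, which is impossible since the latter pair is already a basis. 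This sub-case hands us the ``controlled on first qubit'' form immediately.

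The substantive case is at least three distinct $\ket{\alpha_i}$-directions. Since a two-dimensional space cannot host three pairwise orthogonal lines, among any three pairwise non-parallel $\ket{\alpha_i}$ at least one --- which I relabel as $\ket{\alpha_1}$ --- is non-orthogonal to both of the other two. The bilinear identity then forces $\ket{\beta_1}\perp\ket{\beta_2}$ and $\ket{\beta_1}\perp\ket{\beta_3}$, so $\ket{\beta_2}\parallel\ket{\beta_3}$; substituting back into orthogonality of $e_2$ and $e_3$ forces $\ket{\alpha_2}\perp\ket{\alpha_3}$. A short analysis of $\ket{\alpha_4}$ --- splitting on whether $\ket{\beta_4}$ is parallel to $\ket{b} := \ket{\beta_2}$, to $\ketp{b} := \ket{\beta_1}$, or to neither --- shows that any possibility besides $\ket{\beta_4}\parallel\ketp{b}$ forces $\ket{\alpha_4}$ to have zero inner product with the basis $\{\ket{\alpha_2},\ket{\alpha_3}\}$, a contradiction. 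Hence all four $\ket{\beta_i}$ lie in the orthogonal pair $\{\ket{b},\ketp{b}\}$, and we recover the mirror ``controlled on second qubit'' form $\UC{A}{B}{C}$.

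The main obstacle will be keeping this case analysis clean: a couple of configurations (the ``skew'' two-$\alpha$-direction sub-case and each of the sub-cases for $\ket{\alpha_4}$) look potentially awkward at first, but all collapse via the same elementary fact that two-dimensional Hilbert space cannot host three pairwise orthogonal directions. Beyond that bookkeeping and the bilinear orthogonality identity, no further machinery is needed.
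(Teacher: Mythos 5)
Your proposal is correct and follows essentially the same route as the paper: classify the possible product orthonormal bases of two qubits, conclude they must have the form $\{\ket{a}\ket{b},\ket{a}\ketp{b},\ketp{a}\ket{c},\ketp{a}\ketp{c}\}$ up to swapping the roles of the two qubits, and read off the basis-controlled form. The only difference is that where the paper settles the classification ``by inspection'' of a figure, you carry out the case analysis explicitly via the bilinear identity $\langle\alpha_i|\alpha_j\rangle\langle\beta_i|\beta_j\rangle=0$ and the fact that $\mathbb{C}^2$ admits no three pairwise orthogonal directions, which is a sound and complete replacement for that step.
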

\begin{proof}
Since it is product, the eigenbasis must have the form $\{\ket{\psi_0}\ket{\phi_0}, \ket{\psi_1}\ket{\phi_1}, \ket{\psi_2}\ket{\phi_2}, \ket{\psi_3}\ket{\phi_3}\}$. However, as we will now show, the possible choices for the $\psi_i$ and $\phi_i$ are heavily constrained. We can pick an eigenbasis by arbitrarily choosing a basis vector $\ket{a}\ket{c}$, and then choose the subsequent basis vectors under the constraint that they must be orthogonal to all previous basis vectors -- see Figure \ref{fig:bases}.
\begin{figure}[htbp]
\begin{center}
\includegraphics[scale=0.65]{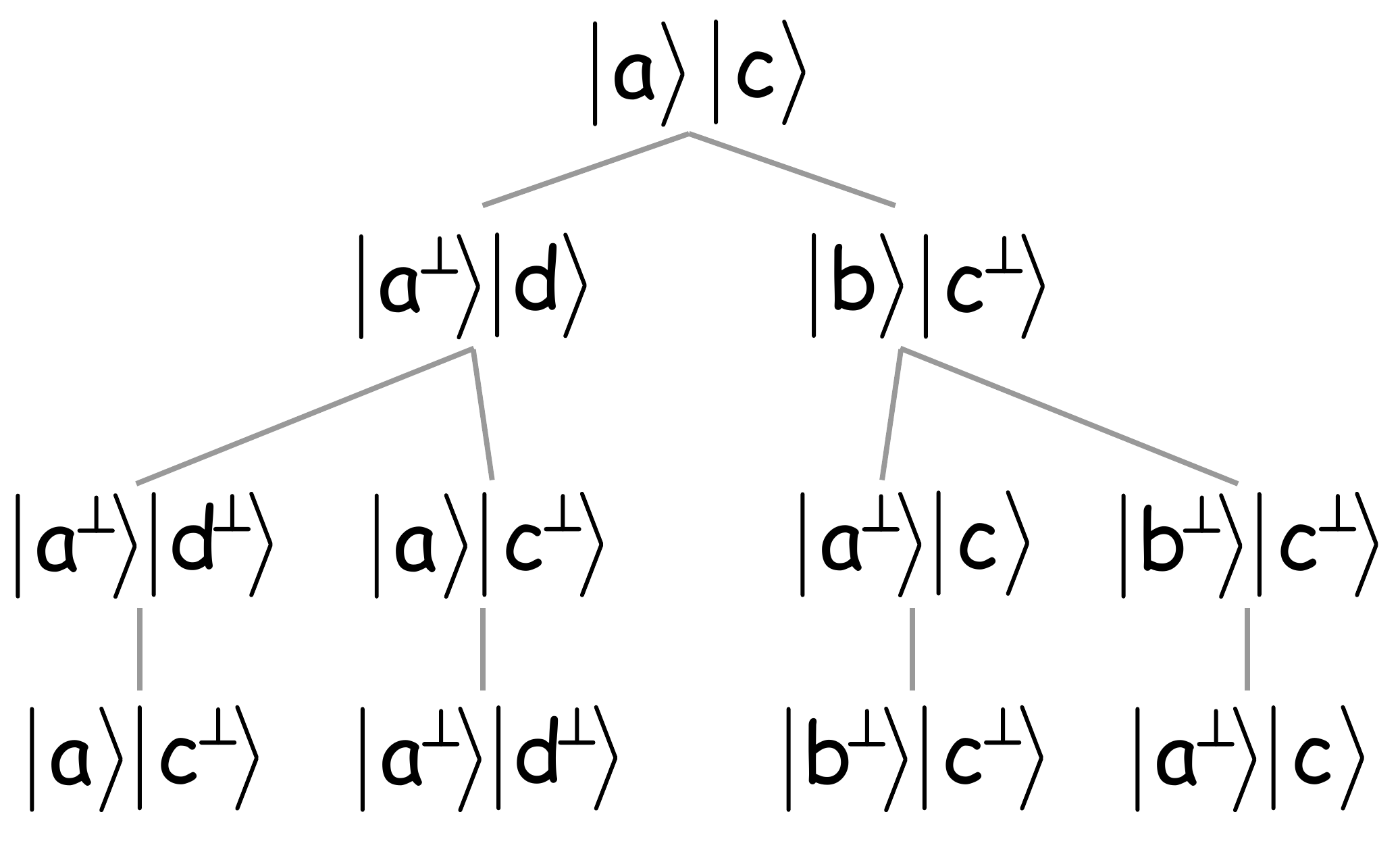}
\caption{All possible choices of product eigenbases for a 2-qubit unitary.}
\label{fig:bases}
\end{center}
\end{figure}
By inspection, one can see that all choices of bases consistent of a single 1-qubit basis on one of the qubits, and two 1-qubit bases on the other qubit, each corresponding to one of the possible basis vectors on the first qubit. Hence, the only basis choice we can have is 
\[
\{ \ket{a}\ket{b}, \ket{a}\ket{b^\perp}, \ket{a^\perp}\ket{c}, \ket{a^\perp}\ket{c^\perp} \},
\]
for some 1-qubit bases $\{\ket{a},\ket{a^\perp}\}, \{\ket{b},\ket{b^\perp}\}, \{\ket{c},\ket{c^\perp}\}$ (up to swapping the first and second qubit). Hence, the unitary $U$ must be of the form 
\[
\proj{a}\left(e^{i\theta_1} \proj{b}+ e^{i\theta_2}\proj{b^\perp}\right) + \proj{a^\perp}\left( e^{i\phi_1} \proj{c}+ e^{i\phi_2}\proj{c^\perp}\right),
\]
where $e^{i\theta_1}, e^{i\theta_2}, e^{i\phi_1},  e^{i\phi_2}$ are the eigenvalues associated with each eigenvector. Written in the $\mathcal{A} = \{\ket{a},\ket{a^\perp}\}$ basis, this is 
\[
\begin{pmatrix}
B & 0 \\
0 & C
\end{pmatrix},
\]
corresponding to a controlled $B/C$ gate: if qubit 1 is in state $\ket{a}$, then the unitary $B = e^{i\theta_1}\proj{b}+e^{i\theta_2}\proj{b^\perp}$ is applied, else if qubit 1 is in state $\ket{a^\perp}$, then unitary $C = e^{i\phi_1}\proj{c}+e^{i\phi_2}\proj{c^\perp}$ is applied. Hence, the unitary must be a basis-controlled unitary $\CU{A}{B}{C}$.
\end{proof}

%%%%%%%%%%%%%%%%%%%%%%%%%%%%%%%%%%%

\subsection{Product control circuits}

In this subsection we will define a type of circuit composed of $1$ and $2$ qubit gates, call a product control circuit, that has a product eigenbasis after each gate (and at each point in time in the continuous version). We will also show that this is the only type of circuit with product eigenbasis, and hence all circuits in $\DQC_{\sep}$ are of this form. 

Every gate in a product control circuit is a basis-controlled unitary of the type in Definition \ref{def:control}. However, these gates cannot be placed arbitrarily. Figure \ref{fig:main_structure} illustrates the possible layouts of the circuits. At every point in the circuit, each qubit can be classified as a control, target, ambiguous or a free qubit. These classification may change as new unitaries are added. We define these classes below. 

\begin{definition}\label{def:types_of_qubit}
We define the following types of qubit in our circuit:
\begin{itemize}
\item Control in $\mathcal{B}$ : A line that is used to control the application of a basis-controlled unitary on a target line, in the basis $\mathcal{B} = \{\ket{b}, \ketp{b}\}$. I.e. the action of the circuit on the target line is either $U_{0,x}$ or $U_{1,x}$, depending on whether the control line is in the state $\ket{b}$ or $\ketp{b}$, and all other control lines are in the state $\ket{x}$. Then there must exist some $x$ such that $[U_{0,x}, U_{1,x}] \neq 0$ (else this qubit would be an ambiguous qubit). The basis of this line is fixed to be $\mathcal{B}$, and it can be acted upon by 2-qubit gates shared with control, target, and ambiguous lines, so long as all these are controlled on this qubit in the $\mathcal{B}$ basis. 
\item Target : Any line that is only acted upon only by control unitaries controlled by a control line.
\item Ambiguous in $\mathcal{B}$ : A line that has a associated basis $\mathcal{B}$, but is not an actual control line. It may apply control unitaries on other lines, however, it is possible to write those (combined) controls as a larger unitary that is diagonal. In the example in Figure \ref{fig:main_structure}, the last gate on the line `ambiguous in $\mathcal{C}$' is diagonal in the basis $\mathcal{A}\otimes \mathcal{B}\otimes...\otimes\mathcal{Z}\otimes\mathcal{C}$. This line is called ambiguous because the next unitary that acts on it can act on it as a control or a target. This would potentially change the classification of the line.
\item Free : A line that is only acted upon by 1-qubit gates.
\end{itemize}
\end{definition}
%%%%%%%%%%%%

\begin{figure}[htbp]
\begin{center}
\includegraphics[scale=1.0]{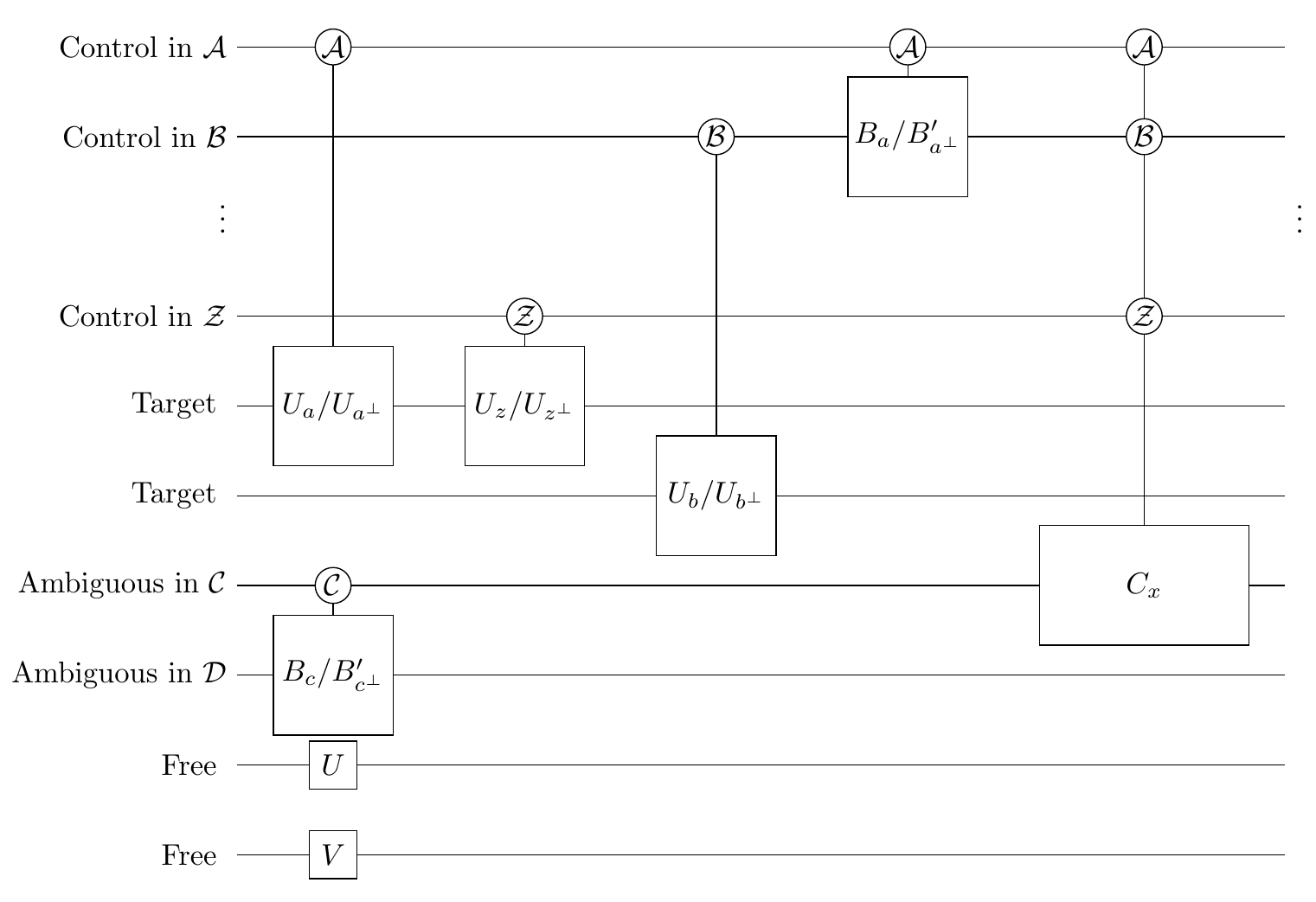}
\caption{This figure gives an example of a circuit that can be constructed using Table \ref{table}. In this example, the first control line applies control unitaries on target and ambiguous lines using $\mathcal{A}$ as the control basis. Gates that are diagonal in $\mathcal{A}\otimes\mathcal{B}$ can be applied between the first two control lines, and similarly, a gate diagonal in $\mathcal{C}\otimes \mathcal{D}$ is possible between ambiguous lines. The final gate illustrated is not a $2$ qubit gate. Instead it is constructed using a series of $2$ qubit gates. In this case it is diagonal in the basis $\mathcal{A}\otimes\mathcal{B}\otimes...\otimes\mathcal{Z}\otimes\mathcal{C}$. See Lemma \ref{lem:lem_3} for more on this kind of gate.}
\label{fig:main_structure}
\end{center}
\end{figure}

% \begin{figure}[htbp]
% \begin{center}
% \includegraphics[scale=1.0]{}
% \caption{Alternate figure: I think the notation for the $P_{\mathcal{A}\mathcal{B}}$-type gates is clearer, and reflects their nature? (this way they look more like a $CZ$ gate). I don't like the gate on the right though. I think it should be as in the original figure.}
% \label{fig:main_structure_alt}
% \end{center}
% \end{figure}

\begin{tabular}{ | m{2cm} | m{2cm}| m{3cm} | m{2cm} | m{2cm}| m{2cm} | }
\hline 
$i$ & $j$ & $U$ & Proof & $i$ after & $j$ after\tabularnewline
\hline 
\hline 
Control, $\mathcal{A}$ & Control, $\mathcal{B}$ & diagonal in $\mathcal{A}\otimes\mathcal{B}$ & Corollary \ref{cor:2}  & Control, $\mathcal{A}$  & Control, $\mathcal{B}$\tabularnewline
\hline 
Control, $\mathcal{A}$ & Ambiguous, $\mathcal{B}$ & $\CU{A}{K}{H}$ & Corollary \ref{cor:2} & Control, $\mathcal{A}$  & Ambiguous/ target\tabularnewline
\hline
Control, $\mathcal{A}$ & Free & $\CU{A}{K}{H}$  &  Corollary \ref{cor:2} & Control, $\mathcal{A}$ & Free/ ambiguous/ target\tabularnewline
\hline 
Control, $\mathcal{A}$ & Target (if special case in Lemma \ref{lem:4} doesn't hold) & $\CU{A}{K}{H}$ & Corollary \ref{cor:2} & Control, $\mathcal{A}$ & Target \tabularnewline
\hline 
Control, $\mathcal{A}$ & Target (if special case in Lemma \ref{lem:4} holds) & Either $\CU{A}{K}{H}$, & Corollary \ref{cor:2} & Either control, $\mathcal{A}$, & Target \tabularnewline
& & or $\UC{B}{K}{H}\CU{A}{E}{F}$ &  & or Target/ Ambiguous & Control, $\mathcal{B}$ \tabularnewline
\hline
Target & Target (if cond (i) in Theorem \ref{lem:10} doesn't hold) & None & Theorem \ref{lem:10} & Target  & Target\tabularnewline
\hline 
Target & Target (if cond (i) in Theorem \ref{lem:10} holds) & Either none, & Theorem \ref{lem:10} &  Either target & Target \tabularnewline
 & & or $\CU{A}{K}{H}(W^{\dagger}_x\otimes \mathbb{I}$) & & Or control, $\mathcal{A}$ & Target\tabularnewline
\hline 
Target & Free (if cond (ii) in Theorem \ref{lem:10} doesn't hold) & $\UC{A}{H}{K}$ & Theorem \ref{lem:10}  & Target & Control, $\mathcal{A}$/ free \tabularnewline
\hline
Target & Free (if cond (ii) in Theorem \ref{lem:10} holds) & Either $\UC{A}{H}{K}$, & Theorem \ref{lem:10}  & Target & Control, $\mathcal{A}$/ free \tabularnewline
 & & or $\CU{A}{K}{H}(W^{\dagger}_x\otimes \mathbb{I}$) &  & Control/ ambiguous, $\mathcal{A}$ & Target/ ambiguous \tabularnewline
\hline
\end{tabular}

\begin{table}
\begin{tabular}{ | m{2cm} | m{2cm}| m{3cm} | m{2cm} | m{2cm}| m{2cm} | }
\hline 
$i$ & $j$ & $U$ & Proof & $i$ after & $j$ after\tabularnewline
\hline 
\hline 
Target & Ambiguous, $\mathcal{B}$ (if cond (iii) in Theorem \ref{lem:10} doesn't hold)   & $\UC{B}{K}{H}$ & Theorem \ref{lem:10} & Target/ Ambiguous & Control/ Ambiguous, $\mathcal{B}$ \tabularnewline
\hline 
Target & Ambiguous, $\mathcal{B}$ (if cond (iii) in Theorem \ref{lem:10} holds)  & Either $\UC{B}{K}{H}$, & Theorem \ref{lem:10} & Either target/ ambiguous & Control/ Ambiguous, $\mathcal{B}$ \tabularnewline
&  & Or $\CU{A}{K}{H}(W^{\dagger}_x\otimes \mathbb{I})$ & Theorem \ref{lem:10} & Or control, $\mathcal{A}$ & Target/ Ambiguous, $\mathcal{B}$ \tabularnewline
\hline 
Ambiguous, $\mathcal{A}$ & Ambiguous, $\mathcal{B}$ & Either $\CU{A}{K}{H}$ & Theorem \ref{lem:10} & Either ambiguous/ control, $\mathcal{A}$ & Ambiguous, $\mathcal{B}$/ target \tabularnewline
& & Either $\UC{B}{K}{H}$ & & Or ambiguous, $\mathcal{A}$/ target & Ambiguous/ control, $\mathcal{B}$ \tabularnewline
\hline 
Ambiguous, $\mathcal{A}$ & Free & Either $\CU{A}{K}{H}$ & Theorem \ref{lem:10} & Control/ Ambiguous, $\mathcal{A}$ & Target/ Ambiguous \tabularnewline

& & Or $\UC{B}{K}{H}$ &  & Target/ Ambiguous & Control, $\mathcal{B}$ \tabularnewline
\hline 
Free, $W$ & Free, $V$ & $\CU{A}{K}{H}(W^{\dagger}\otimes V^{\dagger})$ & Theorem \ref{lem:10}  & Control/ Ambiguous, $\mathcal{A}$ & Target/ Ambiguous \tabularnewline
\hline 
\end{tabular}
\caption{\label{table} This table shows what unitaries are allowed between qubit $i$ and $j$, depending on their current classification in the circuit. In this table `Control, $\mathcal{A}$' and `ambiguous, $\mathcal{A}$' mean a control/ ambiguous line with basis $\mathcal{A}$. `Free, V' means a free line which has had the $1$ qubit unitary $V$ applied to it. This table also shows how the line will be classified after the unitary is applied. However this usually depends free parameters in the unitary as well as the previous circuit. These details can be found in the relevant proofs.}
\end{table}

\begin{definition} \textbf{(Product control unitary)}
A $n$ qubit unitary is a product control unitary if it is constructed as a series of $2$ qubit control unitaries in the following way. All qubits are originally classed as `free'. A basis-controlled unitary is applied between $i$ and $j$ according to Table \ref{table}, and the qubits are reclassified accordingly. The process is continued in this way until the full unitary is constructed. A continuous version of this class can be constructed via Observation \ref{contU}.  
\end{definition}

\begin{theorem} \label{thm:prodcontrol}
% If a circuit is in $\text{DQC1}_{\textrm{sep}}$, it is a controlled $V$ circuit, where $V$ is a product control unitary. Families of circuits in this class are ones for which $V$ can be constructed uniformly. 
Any $n$-qubit $\text{DQC1}_{\textrm{sep}}$ circuit is necessarily of the form $\proj{0} \otimes V + \proj{1} \otimes I$ for some $n-1$-qubit unitary $V$, and is such that $V$ is a product control unitary whose corresponding circuit can be constructed uniformly in polynomial time (classically).
\end{theorem}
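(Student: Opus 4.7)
The plan is to proceed by induction on the index $k$ of the gate sequence $U_1,\ldots,U_r$ composing $V$, using Theorem \ref{thm:DQC1sep} to assert at each step that the partial product $V_k := U_k \cdots U_1$ has a product eigenbasis, and then applying the lemmas referenced in Table \ref{table} to constrain what the next gate can be and how the wire classifications update. The base case $k=0$ is immediate: $V_0 = I$ and every wire is classified as free, and the fact that the overall circuit has the form $\proj{0} \otimes I + \proj{1} \otimes V$ (equivalently $\proj{0} \otimes V + \proj{1} \otimes I$ after relabelling the control qubit) is built into the definition of the model from Section \ref{sec:DQC1_prelims}, so the only real content is the structural claim on $V$.

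For the inductive step, suppose the inductive hypothesis holds after $V_{k-1}$, and let $U_k$ act on wires $i$ and $j$. First I would use Lemma \ref{lem:single_2_qubit}, applied to the two-qubit restriction obtained by fixing the other $n-2$ wires to eigenvectors of $V_{k-1}$, to conclude that $U_k$ itself must be a basis-controlled unitary $\CU{A}{B}{C}$ for some $\mathcal{A}, B, C$. I would then case-split on the ordered pair of current classifications of $(i,j)$, matching each row of Table \ref{table}. Cases in which at least one of $i,j$ is currently a control line follow from Corollary \ref{cor:2}, with Lemma \ref{lem:4} carving out the special sub-case where a control-target gate admits an alternative decomposition that reclassifies the wires; the remaining cases (target-target, target-ambiguous, target-free, ambiguous-ambiguous, ambiguous-free, free-free) fall under Theorem \ref{lem:10}. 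In each case, the allowed forms of $U_k$ and the updated classifications of $i$ and $j$ are exactly those listed in the table, and the new $V_k$ retains a product control structure by inspection.

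The main obstacle is the exhaustiveness of Table \ref{table}: for every pair of input classifications I must show that \emph{no} $2$-qubit unitary outside the listed family would yield a $V_k$ with a product eigenbasis, so that forbidden configurations (for example a gate on two targets when condition (i) of Theorem \ref{lem:10} fails) are genuinely excluded rather than merely not covered. A secondary subtlety is that the updated classification of a wire depends on free parameters inside the allowed $U_k$; for instance, whether a previously ambiguous wire becomes a control or a target depends on whether the combined action on it commutes with the relevant part of $V_{k-1}$, and this branching must be consistent with the rightmost columns of the table. The bulk of the spectral analysis needed here is deferred to the referenced lemmas.

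Finally, for the polynomial-time uniformity claim, I would observe that each $U_k$ is specified by $O(1)$ parameters and acts on two wires; maintaining the classification of each wire under a new gate requires only the current classifications of $i$ and $j$ and the bases $\mathcal{A},\mathcal{B},\ldots$ previously attached to them, so each update runs in constant time. Consequently, given the input DQC1 circuit with $r = \poly(n)$ gates, the full product control description of $V$ can be extracted in time $O(r)$, establishing uniform polynomial-time constructibility.
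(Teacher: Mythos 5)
Your proposal takes essentially the same route as the paper: induction on the number of two-qubit gates, with the base case handled by Lemma \ref{lem:single_2_qubit} and the inductive step deferred entirely to Corollary \ref{cor:2} and Theorem \ref{lem:10} via the case analysis of Table \ref{table}, so the substance of the argument lives in those results in both versions. The one place you overreach is the uniformity claim: determining which row of Table \ref{table} applies is not a constant-time local update, because the special conditions (e.g.\ condition (i) of Theorem \ref{lem:10}, or case (ii) of Lemma \ref{lem:5}) quantify over all settings $x,x'$ of the other control lines, of which there may be exponentially many, so they are not obviously efficiently checkable. The paper handles this by noting that in each such special case an alternative gate is always available regardless of whether the condition holds, and by stipulating that the special-case gates may only be used when the condition happens to be efficiently verifiable; your proof should incorporate this caveat rather than asserting constant-time classification updates.
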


\begin{proof}
We will prove this claim by induction. Let $k$ be the number of $2$ qubit unitaries in $U$. For $k=1$, Lemma \ref{lem:single_2_qubit} shows that the state is only product if the unitary is a control unitary. This is consistent with Table \ref{table}.

Assume $k=t$ is true. We now need to show that the $k=t+1$ unitary $U$ between $i$ and $j$ is consistent with Table \ref{table}. This is proved both in both Corollary \ref{cor:2} and Theorem \ref{lem:10} in the next section. Some of the special cases listed in Table \ref{table} might be generally computationally hard to check. However, in these cases, there is an alternative $U$ that can be applied regardless of whether that condition holds. Hence the unitaries pertaining to these special cases can only be applied if, in that case, it is possible to check the condition efficiently. 
\end{proof}

\begin{theorem}
Any circuit family from $\DQC_\sep$ can be simulated efficiently classically. That is, $\DQC_\sep \subseteq \BPP$. 
\end{theorem}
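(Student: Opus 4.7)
The plan is to exploit the structural characterization of Theorem \ref{thm:prodcontrol}. Since every $\DQC_\sep$ circuit has the form $\proj{0}\otimes V + \proj{1}\otimes I$ for a product control unitary $V$, and since the measurement statistics on the clean qubit in the $X$ or $Y$ basis are affine functions of $\real(\Tr(V))/\dim(V)$ or $\textrm{Im}(\Tr(V))/\dim(V)$ respectively (see Section \ref{sec:DQC1_prelims}), it suffices to estimate $\Tr(V)/\dim(V)$ to inverse-polynomial additive error. Writing the product eigenbasis of $V$ as $\{\ket{u_j}\}$ with unit-modulus eigenvalues $\lambda_j$, we have
\[
\Tr(V)/\dim(V) = \E_{j}[\lambda_j],
\]
for $j$ drawn uniformly. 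Hoeffding's inequality then guarantees that $O(\poly(n))$ independent uniform samples of $\lambda_j$ suffice.

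The core subroutine to implement is therefore a polynomial-time classical algorithm that samples a uniformly random product eigenvector $\ket{u_j}$ of $V$ together with its eigenvalue $\lambda_j$. The key structural fact we exploit is that in a product control circuit the eigenbasis is ``classical-conditional'': each qubit carries a local basis, either fixed (if it is a control line in some basis $\mathcal{B}$) or conditional on the states of other qubits (if it is a target or ambiguous line), and every eigenvector is a tensor product of basis states drawn from these local bases. Moreover, once the joint basis-state configuration is fixed, each basis-controlled gate $\CU{A}{B}{C}$ acts diagonally on its two qubits with an explicit, constant-time-computable phase. The sampler therefore works as follows: pass through the gate list once to record each qubit's classification and basis-dependency data as specified by Table \ref{table}; sample a basis-state bit for each qubit in a topological order respecting these dependencies; finally, replay the gate list on the sampled configuration and take the product of the phases picked up by each gate as $\lambda_j$.

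Correctness follows by induction on the number of gates: when gate $V_{k+1}$ is appended, the product eigenbasis of $V_{k+1}\cdots V_1$ refines that of $V_k\cdots V_1$ only on the two qubits touched by $V_{k+1}$, and our update rule implements exactly this refinement by (possibly) resampling the at most two affected basis-state bits uniformly and multiplying the running eigenvalue by the phase contributed by $V_{k+1}$. The main obstacle is the case analysis required to cover every entry of Table \ref{table}, particularly the conditional branches in which a qubit's classification changes (e.g.\ target to control, or ambiguous to target) and the special cases where an alternative gate form is allowed depending on a check of the preceding circuit. Since each entry of the table corresponds to a single 2-qubit basis-controlled unitary with an explicit 4-element product eigenbasis, every update is a constant-time operation; the full sampling procedure runs in $O(\poly(n))$ time per sample, and combined with the Hoeffding bound this yields the desired $\BPP$ simulation.
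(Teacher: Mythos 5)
Your overall strategy matches the paper's: invoke Theorem \ref{thm:prodcontrol} to reduce to a product control unitary $V$, sample a uniformly random product eigenvector of $V$ by flipping a coin for each qubit in its (possibly conditionally determined) local basis, and average the sampled eigenvalues to estimate the normalised trace; making the Hoeffding bound explicit is a welcome addition over the paper's implicit treatment. The gap is in how you extract the eigenvalue $\lambda_j$. You claim that once the basis-state configuration is fixed, ``each basis-controlled gate acts diagonally on its two qubits with an explicit, constant-time-computable phase,'' and that $\lambda_j$ is the product of per-gate phases. This is false: the sampled vector is an eigenvector of the \emph{whole} unitary $V$, not of its prefixes, and individual gates do not act on it by phases. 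A gate $\CU{A}{B}{C}$ sends $\ket{a}\ket{\psi}$ to $\ket{a}B\ket{\psi}$, which is a phase times $\ket{a}\ket{\psi}$ only when $\ket{\psi}$ is an eigenvector of $B$. Concretely, take a free line on which $W=W_2W_1$ is applied as two separate gates: you correctly sample an eigenvector $\ket{w}$ of $W$, but $W_1\ket{w}$ is in general not proportional to $\ket{w}$, so the first gate contributes no well-defined phase at all; the same happens on a target line acted on by several gates sharing a control, where you sample an eigenvector of the composed conditional unitary $U_x$ rather than of each factor. The paper is explicit on this point: the sampled state ``is not an eigenvector at every time slice.''

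The fix is what the paper actually does, and it is close to your word ``replay'': simulate the sampled product state through the circuit gate by gate. Because every control, ambiguous, and free qubit has been placed in a state of its associated basis, each basis-controlled gate leaves its control qubit fixed and applies a known $1$-qubit unitary to its target, so the state remains a product of explicitly tracked single-qubit vectors at every time step, at a cost of $O(1)$ per gate. At the end the state equals $\lambda_j\ket{u_j}$, and you recover $\lambda_j$ by comparing the final single-qubit states with the sampled ones (the eigenvalue is the product of the per-\emph{qubit} overlaps, not of per-gate phases). With this correction your argument goes through and coincides with the paper's proof.
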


\begin{proof}
Suppose a circuit in $\text{DQC1}_{\textrm{sep}}$ is of the form $\proj{0} \otimes V + \proj{1} \otimes I$. We have already seen that the input mixed state can be considered to be the ensemble $\{\frac{1}{2^n},|+\rangle|v_i\rangle \}_i$, where $\{|v_i\rangle\}_i$ is a product eigenbasis of $V$. If it is possible to efficiently sample one of these vectors and compute its eigenvalue then it is possible to simulate the circuit efficiently. 

This is possible in the following manner. If qubit $i$ is classified as a control of ambiguous line with basis $\mathcal{A}=\{|a\rangle,|a^{\perp}\rangle\}$, flip a coin to choose either $|a\rangle_i$ or $|a^{\perp}\rangle_i$. 

Suppose $j$ is a target line and is acted on by control lines so that the $1$ qubit unitary applied to the line is $U_x$, where $x$ labels the state of the control lines. Set $x$ to be the value chosen in the previous step and compute $U_x$, and its eigenvectors $\{|u_x\rangle, |u_x^\perp\rangle$\}. Flip a coin to choose one of these. 

If line $k$ is a free line to which $W$ has been applied, compute the eigenvectors of $W$ and choose one at random. Combining these $1$-qubit eigenvectors gives one of the eigenvectors $|u_i\rangle$ uniformly randomly. To compute its eigenvalue we can evolve the state through the circuit. This can be done efficiently, even though this is not an eigenvector at every time slice. 

\end{proof}

%%%%%%%%%%%%%%%%%%%%%%%%%%%%%%%%%%%%%%%%%%%%%%%%%%%%%%%%%%%%%%%%%%%%%%
%%%%%%%%%%%%%%%%%%%%%%%%%%%%%%%%%%%%%%%%%%%%%%%%%%%%%%%%%%%%%%%%%%%%%%
%%%%%%%%%%%%%%%%%%%%%%%%%%%%%%%%%%%%%%%%%%%%%%%%%%%%%%%%%%%%%%%%%%%%%%
%%%%%%%%%%%%%%%%%%%%%%%%%%%%%%%%%%%%%%%%%%%%%%%%%%%%%%%%%%%%%%%%%%%%%%
%%%%%%%%%%%%%%%%%%%%%%%%%%%%%%%%%%%%%%%%%%%%%%%%%%%%%%%%%%%%%%%%%%%%%%
%%%%%%%%%%%%%%%%%%%%%%%%%%%%%%%%%%%%%%%%%%%%%%%%%%%%%%%%%%%%%%%%%%%%%%

\section{Main proofs} \label{sec:main_proofs}

In this section we prove the results referenced in Theorem \ref{thm:prodcontrol} as well as Theorem \ref{thm:sep}. 

\textbf{Theorem \ref{thm:sep}}
\textit{The final state of a one clean qubit model computation (in which the circuit applied to the qubits is of the form $\proj{0}\otimes \mathbb{I} + \proj{1}\otimes U$) has no entanglement if and only if $U$ has a separable eigenbasis.}

\begin{proof}
The `if' direction is clear. Equation \ref{eq:final} has no entanglement if $\{|u_i\rangle\}_i$ is a separable eigenbasis. The final state is a mixture of these states and so it does not have entanglement in this case.

For the `only if' direction, we will first show that for any ensemble representing the initial state, $\{p_i, |+\rangle|\phi_i\rangle\}_i$, the clean and noisy register become entangled unless each vector $|\phi_i\rangle$ is an eigenvector of $U$ \footnote{The ensemble may be continuous here. This will not change the analysis.}. 

Suppose $\{p_i, |+\rangle|\phi_i\rangle\}_i$ is an ensemble for $\proj{+}\otimes\mathbb{I}/2^n$. Then the pure states in this ensemble evolve in the following way:

\begin{equation}
|+\rangle|\phi_i\rangle \rightarrow \frac{1}{\sqrt{2}} |0\rangle|\phi_i\rangle+\frac{1}{\sqrt{2}} |1\rangle U|\phi_i\rangle. 
\end{equation}

Let $A$ be the clean register, and $B$ the noisy one. We wish to show that this state has no entanglement between $A$ and $B$ if and only if $|\phi_i\rangle$ is an eigenvector of $U$. 

As a density matrix, this state is 
\begin{equation}
\rho_{AB}\equiv \frac{1}{2} |0\rangle\langle 0|\otimes |\phi_i\rangle\langle \phi_i| +\frac{1}{2} |0\rangle\langle 1|\otimes |\phi_i\rangle\langle \phi_i| U^{\dagger} +\frac{1}{2} |1\rangle\langle 0|\otimes U |\phi_i\rangle\langle \phi_i|+\frac{1}{2} U|1\rangle\langle 1|\otimes |\phi_i\rangle\langle \phi_i|U^{\dagger}.
\end{equation}

And the reduced state on $A$ is
\begin{equation}
\rho_{A}= \frac{1}{2} |0\rangle\langle 0| +\frac{1}{2} |0\rangle\langle 1| \langle \phi_i| U^{\dagger}|\phi_i\rangle + \frac{1}{2} |1\rangle\langle 0| \langle \phi_i| U|\phi_i\rangle +\frac{1}{2} |1\rangle\langle 1| .
\end{equation}

This reduced state is pure if and only if $\rho_{AB}$ has no entanglement between $A$ and $B$. A calculation shows Tr$(\rho_A^2)=1/2+1/2|\langle \phi|U|\phi\rangle|^2$, which is $1$ if and only if $|\phi\rangle$ is an eigenstate of $U$.

This means that the only ensembles without entanglement between the clean and noisy qubits are ones in the form $\{p_j, |+\rangle|u_j\rangle\}_i$, where $|u_i\rangle$ is an eigenvector of $U$. Then it is clear that the only ensembles without entanglement for any bipartition are ones for which each $|u_j\rangle$ in the ensemble is product. 

Apply the inverse of control $U$ on this ensemble to show that
\begin{equation}
\label{eq:identity}
\sum_j p_j |u_j\rangle\langle u_j| = \frac{\mathbb{I}}{2^n}
\end{equation}

For this to be true $\{|u_j\rangle\}_j$ must necessarily span the full Hilbert space.
\end{proof}

The following Lemma will be used throughout: 
\begin{lemma}\label{lem:product}
Suppose we have a state on $|C| + |T|$ qubits, $\ket{\psi}_C\ket{\phi}_T = \sum_x \alpha_x \ket{x}_C \ket{\phi}_T$, that is acted upon by a unitary $U = \sum_x \proj{x} \otimes U_x$:
\[
	\sum_x \alpha_x \ket{x}_C \ket{\phi}_T \mapsto U \left(\sum_x \alpha_x \ket{x}_C \ket{\phi}_T\right) = \sum_x \alpha_x \ket{x}_C U_x\ket{\phi}_T.
\]
If the state $\ket{\psi}\ket{\phi}$ is an eigenvector of $U$ with eigenvalue $e^{i\theta}$, then for all $\alpha_x \neq 0$, $\ket{\phi}_T$ is also an eigenvector of $U_x$ with eigenvalue $e^{i\theta}$. 
\end{lemma}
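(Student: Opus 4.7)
The plan is to exploit the orthonormality of the control basis $\{\ket{x}\}_C$ to decouple the eigenvalue equation into a separate condition for each $x$. First I would write out the hypothesis that $\ket{\psi}\ket{\phi}$ is an eigenvector with eigenvalue $e^{i\theta}$, i.e.
\[
	\sum_x \alpha_x \ket{x}_C\, U_x\ket{\phi}_T \;=\; e^{i\theta} \sum_x \alpha_x \ket{x}_C \ket{\phi}_T,
\]
using the action of $U = \sum_x \proj{x} \otimes U_x$ as given in the statement.

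Next I would apply $\bra{x'}_C \otimes \mathbb{I}_T$ to both sides for an arbitrary computational basis vector $\ket{x'}_C$. Because $\{\ket{x}\}_C$ is orthonormal, all but one term on each side survives, yielding
\[
	\alpha_{x'}\, U_{x'}\ket{\phi}_T \;=\; e^{i\theta}\, \alpha_{x'}\ket{\phi}_T
\]
for every $x'$. Finally, for each $x'$ with $\alpha_{x'} \neq 0$, I would divide through by $\alpha_{x'}$ to conclude $U_{x'}\ket{\phi}_T = e^{i\theta}\ket{\phi}_T$, which is exactly the claim.

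There is essentially no obstacle here — the lemma is really just the observation that a unitary block-diagonal in the $C$ register acts independently on each control-basis block, so an eigenvector of the whole forces the $T$-part to be an eigenvector (with the same eigenvalue) of each block that actually appears in the superposition. The only mild subtlety worth flagging in the write-up is the exclusion $\alpha_x = 0$: if the amplitude on $\ket{x}$ vanishes, then the per-$x$ equation is $0 = 0$ and imposes no constraint on $U_x$, which is why the statement restricts to $\alpha_x \neq 0$.
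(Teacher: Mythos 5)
Your proof is correct, and it follows the same underlying idea as the paper's: the control-basis states $\ket{x}$ are orthonormal, so the eigenvalue equation decouples into one condition per block of the block-diagonal unitary. However, your execution is cleaner than the paper's and avoids a weak point in it. The paper additionally expands $\ket{\phi} = \sum_y \beta_y\ket{y}$ and asserts that the resulting double sum must hold \emph{termwise}, i.e.\ that $\ket{x}U_x\ket{y} = e^{i\theta}\ket{x}\ket{y}$ for every $x,y$. That intermediate claim is not justified as stated (the vectors $U_x\ket{y}$ for different $y$ within a fixed $x$-block can mix, so one cannot equate coefficients of $\ket{x}\ket{y}$ term by term without further argument), and if taken literally it would imply the far-too-strong conclusion $U_x = e^{i\theta}\mathbb{I}$ for all $x$. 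Your version sidesteps this entirely by keeping $\ket{\phi}$ intact and projecting only with $\bra{x'}_C\otimes\mathbb{I}_T$, which immediately gives $\alpha_{x'}U_{x'}\ket{\phi} = e^{i\theta}\alpha_{x'}\ket{\phi}$ and hence the claim for every $x'$ with $\alpha_{x'}\neq 0$. Your remark about why the hypothesis $\alpha_x\neq 0$ is needed is also exactly right.
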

\begin{proof}
We have
\[
	\sum_x \alpha_x \ket{x}_C U_x\ket{\phi}_T = e^{i\theta}\left( \sum_x \alpha_x \ket{x}_C \ket{\phi}_T  \right)
\]
for some phase $\theta$. Write $\ket{\phi} = \sum_y \beta_y \ket{y}$. Then 
\[
\sum_{x,y} \alpha_x \beta_y \ket{x}_C U_x\ket{y}_T = e^{i\theta}\sum_{x,y} \alpha_x \beta_y \ket{x} \ket{y}.
\]
In particular, this means that for every $x$, $y$,
\[
\ket{x}U_x\ket{y} = e^{i\theta}\ket{x}\ket{y}
\]
Since $\{\ket{x}\ket{y}, \ket{x}\ket{y}\}_{x,y}$ forms a orthonormal basis over both registers, this condition can only hold if $U_x\ket{y} = e^{i\theta}\ket{y}$ for all $x,y$.  
\end{proof}

%%%%%%%%%%%%%%%%%%%%%%%%%%%%%%%%%%%%%%%%%%%%%%%%%%%%%%%%%%%%%%%%%%%%%%%%%%%%%

%%%%%%%%%%%%
\begin{lemma}\label{lem:single_qubit_gates}
Suppose we have the following circuit: 

\begin{figure}[htbp]
\begin{center}
\includegraphics[scale=1.0]{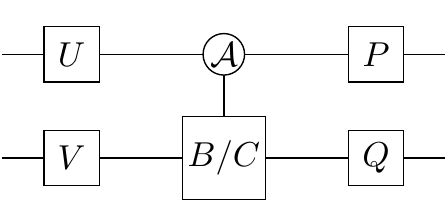}
%\caption{}
\label{fig:lemma2}
\end{center}
\end{figure}

Then its corresponding unitary has a product eigenbasis in the following cases:
\begin{itemize}
\item If $[QBV, QCV] \neq 0$, and $U$ and $P$ are diagonal in the $\mathcal{A}$ basis.
\item If $[QBV, QCV] = 0$.
\item If $B = e^{i \phi}C$ for some angle $\phi$.
\end{itemize}
\end{lemma}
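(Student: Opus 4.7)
The plan is to verify each of the three sufficient conditions by exhibiting an explicit product eigenbasis for the 2-qubit unitary
\[
W \;:=\; (P\otimes Q)\,\CU{A}{B}{C}\,(U\otimes V)
\]
that the circuit computes. The third case is almost trivial: if $B=e^{i\phi}C$ then
\[
\CU{A}{B}{C} \;=\; \proj{a}\otimes B + \proj{a^\perp}\otimes e^{-i\phi}B \;=\; \bigl(\proj{a}+e^{-i\phi}\proj{a^\perp}\bigr)\otimes B,
\]
so $W$ factorises as a tensor product $(PA_\phi U)\otimes(QBV)$ of two 1-qubit unitaries (writing $A_\phi := \proj{a}+e^{-i\phi}\proj{a^\perp}$), and any tensor product of 1-qubit eigenbases of the two factors is a product eigenbasis of $W$.

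For Case 1, I would expand $U = u_0\proj{a}+u_1\proj{a^\perp}$ and $P = p_0\proj{a}+p_1\proj{a^\perp}$ and multiply through: since both sandwiching gates on the control line commute with the $\mathcal{A}$-basis control, the product telescopes to
\[
W \;=\; p_0u_0\,\proj{a}\otimes(QBV) \;+\; p_1u_1\,\proj{a^\perp}\otimes(QCV),
\]
which is itself a basis-controlled unitary (with the scalar factors $p_j u_j$ absorbed as global phases into the target unitaries), so Lemma~\ref{flos} immediately supplies a product eigenbasis. The hypothesis $[QBV,QCV]\neq 0$ plays no role in this argument; its purpose is only to delineate Case~1 from Case~2.

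The substantive case is Case~2, where I would construct a product eigenbasis explicitly. Since $[QBV, QCV]=0$, the two 1-qubit unitaries share an orthonormal eigenbasis $\{\ket{d_0},\ket{d_1}\}$, with eigenvalues $\beta_j,\gamma_j$ on $\ket{d_j}$ for $QBV,QCV$ respectively. For each $j\in\{0,1\}$ define the 1-qubit, $\mathcal{A}$-diagonal unitary $D_j := \beta_j\proj{a}+\gamma_j\proj{a^\perp}$, and let $\ket{\chi_j^\pm}$ be the two eigenvectors of the 1-qubit unitary $P D_j U$. I claim that $\{\ket{\chi_j^\pm}\ket{d_j}\}_{j,\pm}$ is a product eigenbasis of $W$. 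The verification is a direct push-through: writing $U\ket{\chi}=\alpha\ket{a}+\alpha'\ket{a^\perp}$, the state $\ket{\chi}\ket{d_j}$ becomes $\alpha\ket{a}BV\ket{d_j}+\alpha'\ket{a^\perp}CV\ket{d_j}$ after $\CU{A}{B}{C}(U\otimes V)$, and the eigenvalue equations $QBV\ket{d_j}=\beta_j\ket{d_j}$ and $QCV\ket{d_j}=\gamma_j\ket{d_j}$ let me apply $P\otimes Q$ and factor $\ket{d_j}$ out to obtain $(PD_jU\ket{\chi})\otimes\ket{d_j}$; requiring this to be proportional to $\ket{\chi}\ket{d_j}$ is exactly the eigenvalue condition for $\ket{\chi}$ under $PD_jU$.

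This yields four eigenvectors in total, two per value of $j$. Orthogonality is automatic: within a fixed $j$ the vectors $\ket{\chi_j^\pm}$ are orthogonal as eigenvectors of a 1-qubit unitary, and across different $j$ the second-qubit factors $\ket{d_0},\ket{d_1}$ are orthogonal, so together they span the whole 4-dimensional space. I do not anticipate any genuine obstacle; the main things to track carefully are the bookkeeping of the sandwich unitaries $U,V,P,Q$ around the basis-controlled gate in Case~2 and the scalar-absorption step in Case~1.
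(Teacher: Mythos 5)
Your proposal is correct for the lemma as stated, and for the second and third bullets it is essentially the paper's argument made explicit: absorb $V,Q$ into the controlled gate to get $QBV$ and $QCV$, then either factor the whole circuit as a tensor product of single-qubit unitaries (third case) or exploit the fact that a basis-controlled gate with commuting targets can be re-read as controlled on the \emph{other} qubit in the shared eigenbasis $\{\ket{d_0},\ket{d_1}\}$, after which $P$ and $U$ can be absorbed into the new targets (second case). Your explicit eigenbasis $\{\ket{\chi_j^\pm}\ket{d_j}\}$ is precisely the eigenbasis of the paper's flipped gate $\UC{E}{PAU}{PA'U}$ with $A=D_0$, $A'=D_1$; writing out the push-through computation and the orthogonality check is a genuine improvement in rigour over the paper's one-line appeal to the earlier observation. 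The only real divergence is the first bullet. You prove it by direct computation (the $\mathcal{A}$-diagonal gates $U,P$ commute through the control and get absorbed as phases into the targets), and you correctly observe that the hypothesis $[QBV,QCV]\neq 0$ is irrelevant to that sufficiency claim. The paper's proof of this case instead argues the \emph{converse}: assuming $[QBV,QCV]\neq 0$ and a product eigenvector, it deduces that $U\ket{\phi_1}$ must collapse onto $\ket{a}$ or $\ketp{a}$ and hence that the outer single-qubit gates are forced to be compatible with the $\mathcal{A}$ basis. That necessity direction is what the unlabelled corollary immediately after the lemma, and ultimately the characterisation of $\DQC_\sep$ circuits, actually rely on; so while your proof establishes the lemma as literally written, be aware that it does not supply the "only if" content that the paper extracts from this proof and uses downstream.
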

\begin{proof}
We begin by absorbing $V$ and $Q$ into the basis-controlled unitary. Let $B' := QBV$ and $C' := QCV$. Assume that $B'$ and $C'$ do not commute, and (wlog) that $\mathcal{A}$ is the computational basis (the same result can be obtained by writing $a$ and $a^\perp$ in place of 0 and 1 in what follows). Choose some eigenvector $\ket{\phi_1}\ket{\phi_2}$ of the circuit, and write $U\ket{\phi_1} = c\ket{0}+d\ket{1}$. Then the action of the circuit on this eigenvector is 
\[
\ket{\phi_1}\ket{\phi_2} = c\ket{0}\ket{\phi_2}+d\ket{1}\ket{\phi_2} \mapsto c\ket{0}B'\ket{\phi_2}+d\ket{1}C'\ket{\phi_2} \mapsto c P\ket{0}B'\ket{\phi_2}+d P\ket{1}C'\ket{\phi_2}.
\]
Since $P$ is unitary, $P\ket{0}$ and $P\ket{1}$ are orthogonal, and so the final state is only product if either $c = 0$ or $d = 0$, or if $B'\ket{\phi_2} \propto C'\ket{\phi_2}$. In the latter case, by Lemma \ref{lem:product}, $\ket{\phi_2}$ must be an eigenvector of both $B'$ and $C'$, which implies that $[B',C']=0$. By assumption, this is not the case, and so we must have that $c=0$ or $d=0$. 
This implies that $U = P^\dag$, and also that $\ket{\phi_2}$ is an eigenvector of at least one of $B'$ and $C'$. 
\\\\
Now we consider the case where $[B',C'] = 0$. In this case, we know from Section \ref{sec:defs} that $\CU{A}{B'}{C'} = \UC{E}{A}{A'}$, where $\mathcal{E}$ is the shared eigenbasis of $B'$ and $C'$ and $A, A'$ are as defined in Section \ref{sec:defs}. Moreover, we can absorb $U$ and $P$ into the basis-controlled unitary, yielding $\UC{E}{PAU}{PA'U}$.
\\\\
Finally, if $B = e^{i\phi}C$ for some angle $\phi$, then from Section \ref{sec:defs} we know that the circuit collapses to two sets of 3 unitaries acting on each qubit separately.
\end{proof}
%%%%%%%%%%%%
The following corollary follows immediately from Lemma \ref{lem:single_2_qubit}, and demonstrates how to construct a single 2-qubit basis-controlled unitary in each of the three cases above.
%%%%%%%%%%%%
\begin{corollary}
If the circuit from Lemma \ref{lem:single_qubit_gates} has a product eigenbasis, then it can be written as a single basis-controlled unitary. 
\end{corollary}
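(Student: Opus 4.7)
The plan is as follows. The circuit in Lemma~\ref{lem:single_qubit_gates} is, viewed as a whole, a single unitary on two qubits. In each of the three cases listed in that lemma, this composite unitary has a product eigenbasis by hypothesis. Lemma~\ref{lem:single_2_qubit} then tells us immediately that \emph{any} 2-qubit unitary with a product eigenbasis must be expressible as a basis-controlled unitary $\CU{A'}{B''}{C''}$ (or $\UC{A'}{B''}{C''}$) for suitable $\mathcal{A}'$, $B''$, $C''$. Hence the corollary follows at once; the only remaining task is to identify the resulting basis-controlled form in each case, which is essentially already done inside the proof of Lemma~\ref{lem:single_qubit_gates}.

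To make this constructive, I would treat the three cases separately, writing $B' := QBV$ and $C' := QCV$ throughout. In the case $[B', C'] = 0$, the proof of Lemma~\ref{lem:single_qubit_gates} already shows that $\CU{A}{B'}{C'}$ rewrites as $\UC{E}{A_0}{A_1}$, where $\mathcal{E}$ is the shared eigenbasis of $B'$ and $C'$ and $A_0, A_1$ are diagonal in the $\mathcal{A}$ basis. Composing with the surrounding 1-qubit gates $U$ and $P$ on the control qubit then yields the single basis-controlled unitary $\UC{E}{PA_0U}{PA_1U}$. In the case $B = e^{i\phi}C$, the earlier observation that $\CU{A}{B}{e^{i\phi}B}$ factors as a tensor product $D \otimes B$ with $D$ diagonal in $\mathcal{A}$ reduces the whole circuit to $(PDU) \otimes (QBV)$, a tensor product, which is the degenerate basis-controlled form with equal branches (and therefore already a $\CU{A''}{B''}{B''}$).

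The remaining case, $[B', C'] \neq 0$ with $U$ and $P$ both diagonal in $\mathcal{A}$, is the only one requiring a small additional calculation. Since $U$ and $P$ are diagonal in $\mathcal{A}$, they commute with $\proj{a}$ and $\proj{a^\perp}$, so they can be pulled through the basis-controlled unitary and contribute only a phase to each branch. This produces
\[
\CU{A}{e^{i(\alpha_0+\beta_0)}QBV}{e^{i(\alpha_1+\beta_1)}QCV},
\]
where $e^{i\alpha_j}$ and $e^{i\beta_j}$ are the $\mathcal{A}$-eigenvalues of $U$ and $P$ respectively; this is again in basis-controlled form.

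I do not foresee any substantive obstacle: the structural conclusion is forced by Lemma~\ref{lem:single_2_qubit}, and in each case the explicit basis-controlled unitary can simply be read off from the intermediate steps in the proof of Lemma~\ref{lem:single_qubit_gates}. The only thing to be careful about is tracking which single-qubit factor lives on the control line versus the target line, so that the absorption preserves the basis-controlled structure rather than inadvertently rotating the control qubit out of the $\mathcal{A}$ basis — which is precisely why the first case requires the diagonality hypothesis on $U$ and $P$.
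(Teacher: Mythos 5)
Your proof is correct and follows essentially the same route as the paper: invoke Lemma \ref{lem:single_2_qubit} for the structural conclusion, then read off the explicit basis-controlled form in each of the three cases by absorbing the surrounding single-qubit gates exactly as in the proof of Lemma \ref{lem:single_qubit_gates}. Your handling of the non-commuting case is in fact slightly more careful than the paper's, which writes a single common phase $e^{i\theta}$ on both branches where your per-branch phases $e^{i(\alpha_j+\beta_j)}$ are the accurate bookkeeping.
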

\begin{proof}
The correctness follows from Lemma \ref{lem:single_2_qubit}. To find the correct form for the basis-controlled unitary, we observe that if $[QCV,QBC]=0$, then we can simply absorb the unitaries $U, V, P, Q$ into $\CU{A}{B}{C}$ as in the second half of the proof of Lemma \ref{lem:single_qubit_gates}, yielding a basis-controlled unitary $\UC{E}{A}{A'}$. If $B = e^{i\phi}C$ for some angle $\phi$, then we can write the entire circuit as two single-qubit unitaries $UAP \otimes QBV$, where $A = \unitary{1}{0}{0}{e^{i\phi}}$. Finally, if $[QCV,QBV]\neq 0$, then $U$ and $W$ must be diagonal in the $\mathcal{A}$ basis, and so they just contribute a relative phase $e^{i\theta}$ which can be absorbed into $B$ and $C$, yielding the basis-controlled unitary $\CU{A}{e^{i\theta}QBV}{e^{i\theta}QCV}$
\end{proof}
%%%%%%%%%%%%

Occasionally, we might encounter circuits containing gates controlled on a number of qubits, and acting on a target qubit with a unitary that is diagonal in that qubit's associated basis. In these cases, we can replace the gate with an equivalent gate that is controlled on a different set of qubits, and that acts on a different target qubit, but still with a unitary that is diagonal in that qubit's basis -- see Figure \ref{fig:swap_targets}.

\begin{figure}[htbp]
\begin{center}
\includegraphics[scale=0.8]{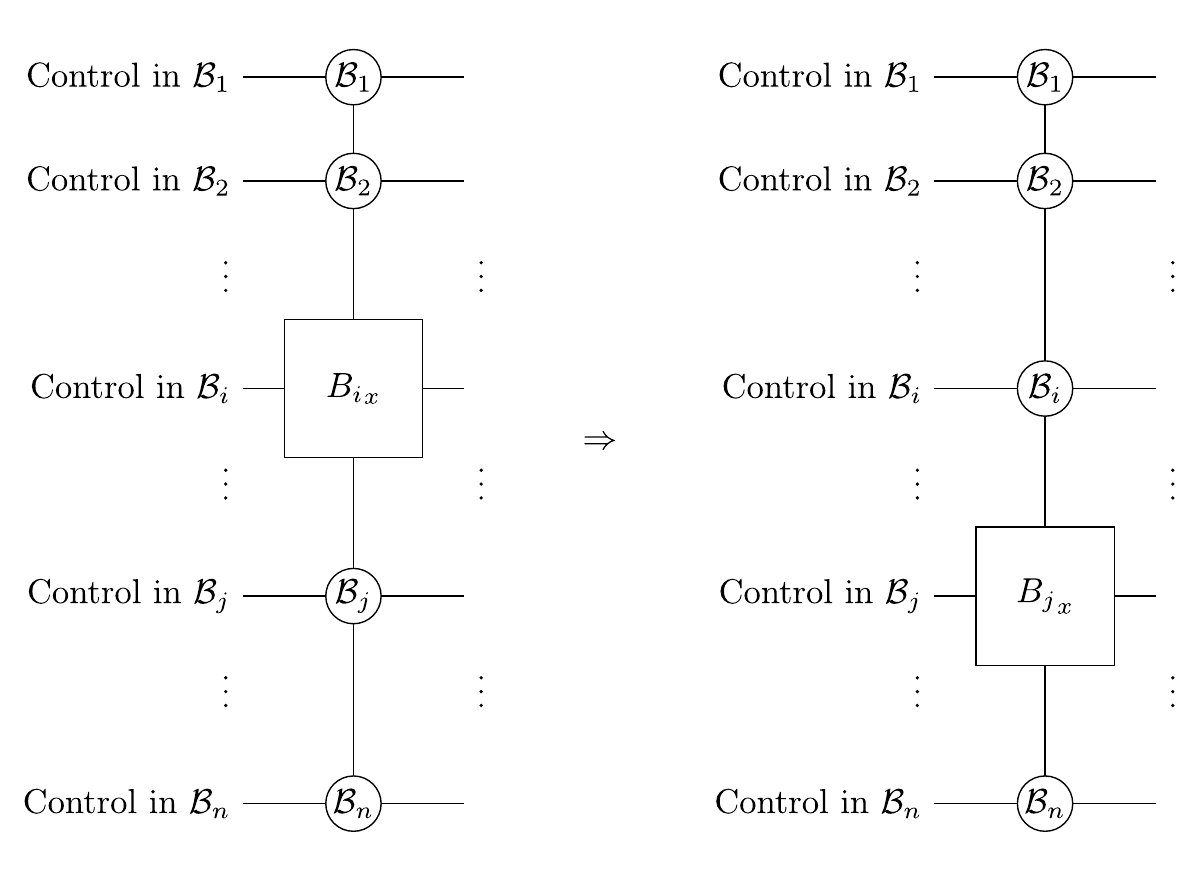}
\caption{Changing the target qubit for a multi-controlled diagonal gate.}
\label{fig:swap_targets}
\end{center}
\end{figure}

%%%%%%%%%%%%
\begin{lemma}\label{lem:lem_3}
Suppose that we have an $n$-qubit circuit composed of a single target qubit $i$, with $n-1$ control unitaries acting on it, each controlled on different control lines (which can be controlled in different bases). Let the basis on the $k$th line be $\mathcal{B}_k = \{\ket{b^k_0}, \ket{b^k_1}\}$ (where $\ket{b^k_1}=\ketp{b^k_0}$). Let $x = x_1, x_2, \dots, x_{n-1}$ be an $(n-1)$-bit string, and $U^{(i)}_x$ be the unitary applied to qubit $i$ when the other control lines are in the state $\ket{x} = \ket{b^1_{x_1}} \otimes \ket{b^2_{x_2}} \otimes \cdots \otimes \ket{b^n_{x_{n-1}}}$. 

Further suppose that $U^{(i)}_x$ is diagonal (in the $\mathcal{B}_i$ basis) for all $x$. Then the circuit can be re-written as a basis-controlled unitary acting on any qubit $j \in [n]$, controlled on all the others, such that the unitary $U^{(j)}$ acting on the new target qubit $j$ remains diagonal in the basis $\mathcal{B}_j$ of that qubit.
\end{lemma}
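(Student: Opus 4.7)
The plan is to first establish that the $n$-qubit unitary $U$ implemented by the circuit is diagonal in the full product basis $\mathcal{B}_1 \otimes \mathcal{B}_2 \otimes \cdots \otimes \mathcal{B}_n$, and then simply reorganise this diagonal decomposition so that any chosen qubit $j$ takes the place of the target. The point is that once we have the spectral decomposition of $U$ in a product basis, the roles of ``target'' and ``controls'' are just a matter of how we group the basis labels.

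First I would write the circuit in its target-$i$ form. Each 2-qubit gate acts on its control as a pair of projectors $\proj{b^k_0},\proj{b^k_1}$ in $\mathcal{B}_k$ tensored with 1-qubit unitaries on $i$, and since each such control projector commutes with every gate controlled on a different line, the product collapses to
\[
U = \sum_{x \in \{0,1\}^{n-1}} \proj{x}_{[n]\setminus\{i\}} \otimes U^{(i)}_x, \qquad \ket{x} = \bigotimes_{k \neq i} \ket{b^k_{x_k}}.
\]
By hypothesis each $U^{(i)}_x$ is diagonal in $\mathcal{B}_i$, so $U^{(i)}_x = \sum_{y \in \{0,1\}} e^{i\theta(x,y)}\, \proj{b^i_y}$ for some real phases $\theta(x,y)$. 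Substituting and merging $(x,y)$ into a single bit-string $z \in \{0,1\}^n$ with $z_i = y$ and $z_k = x_k$ for $k \neq i$ yields
\[
U = \sum_{z \in \{0,1\}^n} e^{i\theta(z)}\, \proj{z}, \qquad \ket{z} = \bigotimes_{k=1}^n \ket{b^k_{z_k}},
\]
so $U$ is diagonal in the full product basis $\bigotimes_k \mathcal{B}_k$.

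To conclude, pick any qubit $j \in [n]$ to play the role of the new target, and split each $z$ as $(w, z_j)$, where $w = (z_k)_{k \neq j} \in \{0,1\}^{n-1}$:
\[
U = \sum_{w \in \{0,1\}^{n-1}} \proj{w}_{[n]\setminus\{j\}} \otimes U^{(j)}_w, \qquad U^{(j)}_w := \sum_{z_j \in \{0,1\}} e^{i\theta(w,z_j)}\, \proj{b^j_{z_j}}.
\]
Each $U^{(j)}_w$ is manifestly diagonal in $\mathcal{B}_j$, which is precisely the multi-controlled basis-controlled form with target $j$ and diagonal target action asserted by the lemma.

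Since the whole argument is a rearrangement of a single diagonal spectral decomposition, there is no serious obstacle. The only subtlety worth articulating carefully in the writeup is the first step: one must justify that the non-commuting 2-qubit gates nevertheless compose into the displayed target-$i$ form. This follows because every gate acts as a $\mathcal{B}_k$-diagonal operator on its own control line, so the control projectors from distinct gates mutually commute and can be collected into the single joint projector $\proj{x}_{[n]\setminus\{i\}}$, with the target actions multiplying (in the appropriate order) into the single 1-qubit operator $U^{(i)}_x$.
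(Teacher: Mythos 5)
Your proposal is correct and follows essentially the same route as the paper: both arguments observe that the hypothesis makes the full unitary diagonal in the product basis $\mathcal{B}_1 \otimes \cdots \otimes \mathcal{B}_n$ with phases indexed by $n$-bit strings, and then simply regroup those phases around whichever qubit $j$ is chosen as the new target. Your writeup is in fact somewhat more explicit than the paper's, in particular in justifying the initial collapse of the circuit into the form $\sum_x \proj{x} \otimes U^{(i)}_x$.
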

\begin{proof}
Let the eigenvalues of $\ket{b^0_0}\ket{x}$ and $\ketp{b^0_1}\ket{x}$ be $e^{i \theta_{0,x}} = e^{i \theta_{1,x_0,x_1,\dots,x_{n-1}}}$ and $e^{i \theta_{1,x_0,x_1,\dots,x_{n-1}}}$. 

Now let $\ket{y, 0}$ (resp. $\ket{y,1}$) be the state where all lines but the $j$th are in state $\ket{y}$, and the $j$th qubit is in state $\ket{b^j_0}$ (resp $\ket{b^j_1}$). That is, the qubits are in the state $\ket{b^0_{y_0}}\otimes \cdots \otimes \ket{b^j_{b}} \otimes \cdots \otimes \ket{b^n_{y_n}}$ for $b=0$ (resp. $b=1$). Then $B_y$ has eigenvalues $e^{i \theta_{y_1, \dots, 0_j, \dots, y_n}}$ and $e^{i \theta_{y_1, \dots, 1_j, \dots, y_n}}$ corresponding to eigenvectors $\ket{y,0}$ and $\ket{y,1}$, respectively. 

Hence, $B_y$ is diagonal in basis $\mathcal{B}_j$. Furthermore, given that one knows this condition is met, it is efficient to compute $B_y$ for any $y$.
\end{proof}
%%%%%%%%%%%%

Whenever we add a 2-qubit gate $U$ to a circuit $C$, all qubits that have an associated basis (i.e. control, ambiguous, or free qubits) and aren't acted upon by this new gate remain unchanged. That is, in the eigenbasis of the new circuit $UC$, the basis associated to each of these qubits remains unchanged from the eigenbasis of the previous circuit $C$. The following lemma formalises this observation.
%%%%%%%%%%%%
\begin{lemma}\label{lem:4}
Suppose that we have a circuit $C$ on $n$ qubits in the product control form (see Figure \ref{fig:main_structure}). Now suppose that a 2-qubit gate $U$ is applied to qubits $i$ and $j$, and that the unitary of the resulting circuit has a product eigenbasis $\mathcal{P}$. 

Then for all qubits $k \neq i, j$ corresponding to control, ambiguous, or free lines in basis $\mathcal{B}_k$ (as defined in Definition \ref{def:types_of_qubit}), the $k$th qubit of every eigenvector in $\mathcal{P}$ is in the state $\ket{b^{(k)}_0}$ or $\ket{b^{(k)}_1}$ from $\mathcal{B}_k$.
\end{lemma}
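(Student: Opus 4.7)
The plan is to exploit the block-diagonal structure of $C$ on qubit $k$ with respect to $\mathcal{B}_k$ (which is forced by the product control form of $C$), combined with the fact that the added 2-qubit gate $U$ acts trivially on qubit $k$, and then apply Lemma \ref{lem:product} to each eigenvector of $\mathcal{P}$ individually.

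First I would verify that, for each of the three classifications of qubit $k$, the product control construction forces $C$ to admit a decomposition
\[
C = \proj{b_0^{(k)}}_k \otimes C_0 + \proj{b_1^{(k)}}_k \otimes C_1,
\]
where $C_0, C_1$ are unitaries on the remaining $n-1$ qubits. For control and ambiguous lines this follows from Table \ref{table}: every gate that acts on qubit $k$ is a basis-controlled unitary with qubit $k$ acting as a $\mathcal{B}_k$-control, hence block-diagonal in $\mathcal{B}_k$ on that qubit. For a free line with applied single-qubit unitary $W_k$, $C$ factorises as $W_k \otimes C'$, and diagonalising $W_k$ in its eigenbasis $\mathcal{B}_k$ yields the same block form (with $C_0, C_1$ differing only by an overall phase). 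Because $U$ acts only on qubits $i, j \neq k$, it commutes with the projectors on qubit $k$, and $UC$ inherits the decomposition $UC = \proj{b_0^{(k)}}_k \otimes UC_0 + \proj{b_1^{(k)}}_k \otimes UC_1$.

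Now fix any $\ket{\psi} \in \mathcal{P}$ with eigenvalue $e^{i\theta}$, write it as $\ket{\psi} = \ket{a}_k \otimes \ket{\phi}$ with $\ket{\phi}$ a product state on the other $n-1$ qubits, and expand $\ket{a}_k = \alpha \ket{b_0^{(k)}} + \beta \ket{b_1^{(k)}}$. Applying Lemma \ref{lem:product} with qubit $k$ as the control register and the remaining qubits as the target, I obtain that $\ket{\phi}$ is an eigenvector of $UC_0$ with eigenvalue $e^{i\theta}$ whenever $\alpha \neq 0$, and an eigenvector of $UC_1$ with eigenvalue $e^{i\theta}$ whenever $\beta \neq 0$. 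If either coefficient vanishes then $\ket{a}_k \in \mathcal{B}_k$ and the claim for this particular $\ket{\psi}$ holds immediately.

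The main obstacle is the remaining case $\alpha, \beta \neq 0$, in which $\ket{\phi}$ would simultaneously be a product eigenvector of both $UC_0$ and $UC_1$ at the same eigenvalue $e^{i\theta}$. The plan to close this gap is to use the classification of qubit $k$ to rule this out case by case: for a free line with nontrivial $W_k$, the two blocks $UC_0$ and $UC_1$ differ by a nonzero overall relative phase $e^{i(\mu_0-\mu_1)}$, so their spectra are disjoint and no common eigenvector can exist; for a control line in $\mathcal{B}_k$ the defining non-commutation $[U_{0,x},U_{1,x}] \neq 0$ on some control sector $\ket{x}$, combined with a recursive application of Lemma \ref{lem:product} to the further control factors inside the product $\ket{\phi}$, forces $\ket{\phi}$ into an eigenbasis structure that is incompatible with shared eigenvalue across the two blocks; for an ambiguous line the collective gates on qubit $k$ are by construction diagonal in $\mathcal{B}_k$, so any resulting spectral degeneracy sits within a single $\mathcal{B}_k$-aligned subspace and the product constraint on $\ket{\psi}$ already pins the $k$th qubit to $\mathcal{B}_k$. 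The ambiguous case is the subtlest and, in my view, the main technical sticking point, since it requires tracking how the diagonal-on-$k$ property is preserved throughout the construction sequence allowed by Table \ref{table}.
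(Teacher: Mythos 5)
Your opening move is sound and matches the paper's in spirit: block-diagonalise the circuit over qubit $k$ in the basis $\mathcal{B}_k$, note that $U$ commutes with the projectors on $k$, and use Lemma \ref{lem:product} to conclude that an eigenvector $(\alpha\ket{b^{(k)}_0} + \beta\ket{b^{(k)}_1})_k\otimes\ket{\phi}$ with $\alpha,\beta\neq 0$ forces $\ket{\phi}$ to be a common eigenvector of $UC_0$ and $UC_1$ at the same eigenvalue. The gap is in what you do next: you try to \emph{exclude} the case $\alpha,\beta\neq 0$, and for control (and ambiguous) lines this exclusion is simply false, so no amount of technical work will close it. Concretely, take $n=3$ with $k$ a control line in the computational basis $\mathcal{A}$, $t$ a target, $f$ free, and set $i=t$, $j=f$. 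Let $M = \proj{0} + e^{i\gamma}\proj{1}$ with $e^{i\gamma}\neq 1$, and let $C = \CU{A}{H}{HM^\dagger}$ on $(k,t)$; since $[H, HM^\dagger] = M^\dagger - HM^\dagger H \neq 0$, qubit $k$ is a genuine control line. Now apply the (perfectly legal) 2-qubit gate $U = H\otimes\mathbb{I}$ to $(t,f)$. Then $UC = \proj{0}_k\otimes\mathbb{I}_t + \proj{1}_k\otimes M^\dagger_t$ (tensored with $\mathbb{I}_f$), and $\ket{0}_t$ is a common eigenvector of the two blocks $UC_0 = \mathbb{I}$ and $UC_1 = M^\dagger$ with the \emph{same} eigenvalue $1$, even though $[V_0,V_1]\neq 0$. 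Indeed
\begin{equation*}
\mathcal{P} = \left\{\ket{+}_k\ket{0}_t,\ \ket{-}_k\ket{0}_t,\ \ket{0}_k\ket{1}_t,\ \ket{1}_k\ket{1}_t\right\}\otimes\left\{\ket{0}_f,\ket{1}_f\right\}
\end{equation*}
is a full product eigenbasis of $UC$ whose first two eigenvectors have qubit $k$ in $\ket{\pm}\notin\mathcal{A}$. The flaw in your control-line argument is a conflation: $[V_{0,x},V_{1,x}]\neq 0$ rules out a common \emph{eigenbasis} of $V_0$ and $V_1$, but the common eigenvector you need to forbid is one of $UC_0$ and $UC_1$, and this exists whenever $V_1^\dagger V_0$ has a unit eigenvalue (so that $V_0\ket{\phi_t} = V_1\ket{\phi_t}$), which is entirely compatible with noncommutation. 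A similar degeneracy kills the ambiguous case. (Minor additional slip: in the free case, "the spectra are disjoint" is wrong — the spectra of $UC_0$ and $e^{i(\mu_1-\mu_0)}UC_0$ can overlap; the correct observation is that a single vector cannot carry two distinct eigenvalues of $UC_0$, and even this needs $W_k$ to have non-degenerate eigenvalues.)

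The resolution is that the lemma's literal "every eigenvector in $\mathcal{P}$" phrasing is stronger than what is true, and stronger than what the paper actually proves or later needs. The paper's proof decomposes over \emph{all} control, ambiguous, and free lines at once, $C = \sum_x (C_x)_{S'}\otimes\proj{x}_S$, and argues by contradiction that every block $UC_x$ admits a product eigenbasis: when an eigenvector of $\mathcal{P}$ has several nonzero amplitudes $\alpha_x$, it does not derive a contradiction but instead \emph{harvests} $\ket{\phi}$ as a product eigenvector of each such block, and reassembles these into a new product eigenbasis of $UC$ whose members are all of the aligned form $\ket{\phi}_{S'}\otimes\ket{x}_S$. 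In other words, degeneracy is handled by re-choosing the basis, not by exclusion — exactly what my counterexample above shows is necessary (there the aligned basis $\{\ket{0}_k,\ket{1}_k\}\otimes\cdots$ also exists). To repair your write-up, keep your per-qubit decomposition and the Lemma \ref{lem:product} step, but replace the case analysis with this re-choice argument, and weaken the conclusion to: \emph{there exists} a product eigenbasis of $UC$ in which every qubit $k\neq i,j$ of every eigenvector lies in $\mathcal{B}_k$ — which is the form in which the lemma is actually invoked in Lemma \ref{lem:5}, Corollary \ref{cor:2}, and Theorem \ref{lem:10}.
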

\begin{proof}
Let $S$ be the set of all control, ambiguous, and free lines other than $i$ and $j$, and let $S' = [n]\setminus S$ be the rest. Let $x$ be an $|S|$-bit string, and suppose that all qubits in $S$ are in the state $\ket{x}_S := \bigotimes_{k \in S} \ket{b^{(k)}_{x_k}}$. For some state $\ket{\psi}$ on the qubits in $S'$, we have $\ket{\psi}_{S'}\otimes\ket{x}_S \mapsto^{C} \ket{\psi'}_{S'} \otimes \ket{x}_S$, and we can write the action of $C$ as
\[
C = \sum_{x \in \{0,1\}^{|S|}} (C_{x})_{S'} \otimes \proj{x}_S,
\]
where $(C_{x})_{S'}$ is a unitary acting on the qubits in $S'$. Let $\mathcal{U}\mathcal{C}_x$ be an eigenbasis for $UC_x$. Then $\bigcup_{x \in \{0,1\}^{|S|}} \left(\mathcal{U}\mathcal{C}_x \otimes \proj{x}_S\right)$ is an eigenbasis of $UC$, and is product if the basis $\mathcal{U}\mathcal{C}_x$ is product for all $x \in \{0,1\}^{|S|}$. 
\\\\
Suppose by way of contradiction that there is no such product eigenbasis for $UC_{x'}$, for some $x'$, but that there \emph{is} a product eigenbasis for the whole circuit $UC$. Choose an eigenvector $\ket{\phi}_{S'} \otimes \ket{\psi}_S$ from that basis, and write $\ket{\psi}_S = \sum_{x \in \{0,1\}^{|S|}} \alpha_x \ket{x}_S$. Note that there must exist some choice of $\ket{\phi}$ such that $\alpha_{x'} \neq 0$, else the set of eigenvectors wouldn't span the entire space over the $n$ qubits. Without loss of generality, assume that we have chosen such a $\ket{\phi}$. 

The circuit $UC$ acts on this eigenvector as 
\[
	\ket{\phi}_{S'} \otimes \ket{\psi}_S \mapsto \sum_{x \in \{0,1\}^{|S|}} \alpha_x \left( UC_{x}\ket{\phi}\right) \otimes \ket{x}_S \quad = \quad e^{i\theta} \ket{\phi}_{S'} \otimes \ket{\psi}_S
\]
for some angle $\theta$. 
By Lemma \ref{lem:product}, the equality can only be true if $\ket{\phi}$ is an eigenvector of all $UC_x$ such that $\alpha_x \neq 0$ (which includes $x'$ by assumption). Since this holds for any eigenvector that we choose from the eigenbasis of $UC$, this contradicts our assumption that there exists no product eigenbasis for $UC_x$. Hence there must exist a product eigenbasis for $UC$ whose eigenvectors are of the form $\ket{\phi}_{S'} \otimes \ket{x}_S$ for all $x \in \{0,1\}^{|S|}$. 

In particular, this means that for any qubit $k \neq i, j$ corresponding to a control, ambiguous, or free line in basis $\mathcal{B}$, the $k$th qubit of any eigenvector of the circuit $UC$ remains unchanged from the corresponding eigenvector of the circuit $C$.
\end{proof}
%%%%%%%%%%%%

The next lemma, and the corollary that follows, show what constraints must be obeyed by a new 2-qubit gate if it is to act on a control qubit $i$ in basis $\mathcal{A}$ and another qubit $j$. In general, the new unitary must be a control-unitary, controlled on qubit $i$ in basis $\mathcal{A}$ and acting on qubit $j$. However, if a certain condition is met, then it is possible for the gate to be a control-unitary of the form $\CU{E}{B}{B'}$ for some basis $\mathcal{E}$ and commuting unitaries $B$ and $B'$. In this case, by Lemma \ref{lem:4}, the gate can be controlled on either one of $i,j$, whilst acting on the other. 
%%%%%%%%%%%%
\begin{lemma}\label{lem:5}
Once again, suppose that we have a circuit $C$ on $n$ qubits in product control form, and that a 2-qubit gate $U$ is applied to qubits $i$ and $j$, and also that the unitary of the resulting circuit has a product eigenbasis.

Suppose that qubit $i$ is a control qubit in basis $\mathcal{A}$ (which for simplicity, and wlog, we will assume is just the computational basis). Then, either
\begin{itemize}
\item[(i)] There exists a product eigenbasis of $UC$ in which line $i$ always has state $\ket{0}_i$ or $\ket{1}_i$, or
\item[(ii)] Line $j$ is the only target line that $i$ acts on, and the following condition holds. Let all other control lines be in the state $\ket{x}$ and suppose that $C$ is such that $V_{0,x}$ is applied to $j$ when $i$ is in state $\ket{0}_i$, and $V_{1,x}$ is applied to $j$ when $i$ is in state $\ket{1}_i$. Then there exist unitaries $D_0$ and $D_1$ such that both $D_0V_{0,x}$ and $D_1V_{1,x}$ are diagonal in a basis $\mathcal{D}$ for all $x$. 
\end{itemize}
\end{lemma}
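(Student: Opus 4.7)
The plan is to first apply Lemma \ref{lem:4} to reduce the problem to analyzing $UC$ one ``other-controls-fixed'' sub-circuit at a time. By Lemma \ref{lem:4}, every non-$(i,j)$ control/ambiguous/free qubit sits in one of its basis states $\ket{x}$ in the product eigenbasis, and the circuit $C$ restricted to this sector takes the form $C_x = \proj{0}_i \otimes V_{0,x} + \proj{1}_i \otimes V_{1,x}$, where $V_{a,x}$ acts on $j$ together with $i$'s target lines. A product eigenbasis of $UC$ is obtained by picking a product eigenbasis of each $UC_x$ on $(i,j)$ plus the target lines of $i$, and tensoring with $\ket{x}$. I would then split into two cases according to whether $i$ controls any target line other than $j$.

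In Case A ($i$ controls some target $k \neq j$) the goal is to prove (i). The product-control form decomposes the action on each target into 1-qubit basis-controlled gates, so the effective unitary on the single qubit $k$, once all of $k$'s controls are fixed, is a 1-qubit unitary $V^{(k)}_{a,x}$. Because $i$ is a strict control, there exists a fixing $\ket{x}$ making $V^{(k)}_{0,x}$ and $V^{(k)}_{1,x}$ non-commuting. Now consider any product eigenvector $\ket{\alpha}_i \ket{\tau_k}_k \otimes \cdots$ of $UC$ with $\ket{\alpha} = c_0\ket{0}+c_1\ket{1}$; since $U$ does not touch qubit $k$, a Lemma \ref{lem:product}-style projection onto qubit $k$ forces $\ket{\tau_k}$ to be a common eigenvector of $V^{(k)}_{0,x}$ and $V^{(k)}_{1,x}$ whenever both $c_0, c_1 \neq 0$. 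But two non-commuting $1$-qubit unitaries share no eigenvector: sharing one eigenvector makes each unitary upper-triangular in that eigenbasis, hence diagonal by unitarity, hence commuting. So $\ket{\alpha}\in\{\ket{0},\ket{1}\}$, and assembling such eigenvectors over all $\ket{x}$ via Lemma \ref{lem:4} yields a product eigenbasis of $UC$ with $i$ in $\mathcal{A}$, giving (i).

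In Case B ($j$ is the only target of $i$) the restricted operator $T_x := UC_x$ lives on just $(i,j)$, so by Lemma \ref{lem:single_2_qubit} it must be a basis-controlled 2-qubit unitary for each $x$. If (i) fails then for some $x$ the eigenbasis of $T_x$ does not have $i$ in $\mathcal{A}$, which forces $T_x$ to be controlled on $j$ in some basis $\mathcal{D}_x = \{\ket{d_x}, \ket{d_x^\perp}\}$, i.e.\ $T_x = B_x \otimes \proj{d_x}_j + B_x' \otimes \proj{d_x^\perp}_j$ for 1-qubit unitaries $B_x, B_x'$ on $i$. Using that $U$ is $x$-independent, comparing $T_x T_{x'}^{-1} = U(C_x C_{x'}^{-1})U^{-1}$ with the $j$-controlled form forces $\mathcal{D}_x$ to be the same basis $\mathcal{D}$ across all $x$. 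Writing $T_x = UC_x$ as a $4\times 4$ matrix in the $\mathcal{A}_i \otimes \mathcal{D}_j$ basis in two ways, once as $U$ times the $\mathcal{A}$-block-diagonal $C_x$, and once as the characteristic checkerboard-zero form of $B_x \otimes \proj{d}_j + B_x' \otimes \proj{d^\perp}_j$, and matching entries, one reads off that the diagonal $\mathcal{A}$-blocks of $U$, call them $D_0$ and $D_1$, satisfy: $D_0 V_{0,x}$ and $D_1 V_{1,x}$ are diagonal in $\mathcal{D}$ for every $x$, which is exactly condition (ii). The main obstacle is the bookkeeping in Case B: ensuring that $\mathcal{D}_x$ is independent of $x$, handling the degenerate sub-cases of the basis-controlled form of $T_x$ (e.g.\ when $[B_x, B_x'] = 0$, where the control/target role can be swapped as in Lemma \ref{lem:single_qubit_gates}), and verifying that the extracted $D_0, D_1$ are genuine unitaries.
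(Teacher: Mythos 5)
Your overall strategy is essentially the paper's: reduce to the sector circuits $UC_x$ via Lemma \ref{lem:4}, split on whether $i$ controls a target other than $j$, and use the $x$-independence of $U$ to pin down a single basis $\mathcal{D}$ on qubit $j$. Your Case A is fine, and is in fact slightly more careful than the paper's version (which applies Lemma \ref{lem:product} without flagging that the zero-amplitude eigenvectors are exactly what produce alternative (i)).

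Case B, however, has a genuine gap, and it is precisely the obstacle you flagged but did not resolve: the diagonal $\mathcal{A}$-blocks $U_{00}, U_{11}$ of $U$ are in general \emph{not} unitary, and can even be singular. If $UC_x = E\otimes\proj{d} + F\otimes\proj{d^\perp}$ with $E, F$ not diagonal in $\mathcal{A}$, then $U_{00}V_{0,x}$ is the matrix $\mathrm{diag}(\bra{0}E\ket{0}, \bra{0}F\ket{0})$, whose entries are not unimodular; so these blocks cannot serve as the unitaries $D_0, D_1$ demanded by condition (ii), and when they are singular you also cannot transfer diagonality from one $x$ to another by inverting them. The paper sidesteps this entirely by fixing a reference string $x_0$ and taking $D_0 = V_{0,x_0}^\dagger$, $D_1 = V_{1,x_0}^\dagger$, which are manifestly unitary; the diagonality of $V_{0,x_0}^\dagger V_{0,x'}$ and $V_{1,x_0}^\dagger V_{1,x'}$ in $\mathcal{D}$ then comes from requiring that $UC_{x'} = \UC{D}{E_{x_0}}{F_{x_0}}\,\CU{A}{V_{0,x_0}^\dagger V_{0,x'}}{V_{1,x_0}^\dagger V_{1,x'}}$ have a product eigenbasis. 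You should adopt that choice. A second, smaller gap is your opening dichotomy in Case B: Lemma \ref{lem:single_2_qubit} only says $UC_x$ is \emph{some} basis-controlled unitary, so ``(i) fails at $x$'' leaves open the possibility that $UC_x$ is controlled on $i$ in a basis $\mathcal{E}\neq\mathcal{A}$ with non-commuting branches, which is not of the $j$-controlled form you assume. The paper closes this at the level of eigenvectors: a product eigenvector with both amplitudes on qubit $i$ nonzero forces $U(\ket{0}V_{0,x}\ket{\phi}) = \ket{u}\ket{\phi}$ and $U(\ket{1}V_{1,x}\ket{\phi}) = \ketp{u}\ket{\phi}$ (and similarly for $\ketp{\phi}$), which directly exhibits $UC_x$ as controlled on $j$ in the basis $\{\ket{\phi},\ketp{\phi}\}$.
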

\begin{proof}
Similarly to the proof of Lemma \ref{lem:4}, we can consider the actions and eigenbases of the circuits $UC_x$ for $x \in \bit^{|S|}$, where $S$ is the set of all control, ambiguous, and free lines other than $i$ and $j$. Let $T$ be the set of all target qubits outside of $i$ and $j$. 
\\\\
First, suppose that there is at least one line $l$ in $T$ that $i$ acts as a control on. Let $W^{(t)}_{0,x}$ (resp. $W^{t}_{1,x}$) be the action of $U$ on qubit $t \in T$ when qubit $i$ is in state $\ket{0}_i$ (resp. $\ket{1}_i$). For an arbitrary (product) eigenvector $\ket{\psi}_i\ket{\phi}_j\ket{x}_{S}\ket{\varphi}_T$, then writing $\ket{\psi}_i = \alpha\ket{0}_i + \beta\ket{1}_i$, the action of $UC_x$ on this state is
\begin{eqnarray*}
	(\alpha\ket{0}_i + \beta\ket{1}_i)\ket{\phi}_j\ket{x}_S\ket{\varphi}_T 
	\mapsto 
	&\alpha U(\ket{0}_i V_{0,x} \ket{\phi}_j) \ket{x}_{S} W^{(1)}_{0,x} \otimes \cdots \otimes W^{(|T|)}_{0,x} \ket{\varphi}_T + \\
	&\beta U(\ket{1}_i V_{1,x} \ket{\phi}_j) \ket{x}_{S} W^{(1)}_{1,x} \otimes \cdots \otimes W^{(|T|)}_{1,x} \ket{\varphi}_T,
\end{eqnarray*}
Therefore, given that $\ket{\psi}_i\ket{\phi}_j\ket{x}_{S}\ket{\varphi}_T$ is an eigenvector of $UC_c$, then we must have 
\begin{eqnarray*}
&\alpha U(\ket{0}_i V_{0,x} \ket{\phi}_j) W^{(1)}_{0,x} \otimes \cdots \otimes W^{(|T|)}_{0,x} \ket{\varphi}_T& + \\
&\beta U(\ket{1}_i V_{1,x} \ket{\phi}_j) W^{(1)}_{1,x} \otimes \cdots \otimes W^{(|T|)}_{1,x} \ket{\varphi}_T&
= e^{i\theta}\ket{\psi}_i\ket{\phi}_j\ket{\varphi}_T.
\end{eqnarray*}
for some angle $\theta$. By Lemma \ref {lem:product}, this implies that $\ket{\phi}_T$ is an eigenvector of both $W^{(1)}_{0,x} \otimes \cdots \otimes W^{(|T|)}_{0,x}$ and $W^{(1)}_{1,x} \otimes \cdots \otimes W^{(|T|)}_{1,x}$ with eigenvalue $e^{i\theta}$. Since, by assumption, $\ket{\phi}$ is a product state $\ket{\phi^{(1)}} \otimes \ket{\phi^{(2)}} \otimes \cdots \otimes \ket{\phi^{(l)}} \otimes \cdots \otimes \ket{\phi^{(|T|)}}$, then $\ket{\phi^{(l)}}$ must be an eigenvector of both $W^{(l)}_{0,x}$ and $W^{(l)}_{1,x}$. This means that qubit $i$ does not act (non-trivially) as a control qubit on qubit $l$, contradicting our assumption above. Hence, there cannot be a qubit in $T$ that is acted upon as a control by $i$. 
\\\\
Henceforth we will assume that the only qubit that $i$ acts as a control on is $j$. Then the action of the circuit on the same eigenvector (omitting the $S$ (control) register) is 
\[
	\alpha U(\ket{0}_i V_{0,x} \ket{\phi}_j) \ket{\varphi}_T + \beta U(\ket{1}_i V_{1,x} \ket{\phi}_j) \ket{\varphi}_T   \quad \propto \quad   \ket{\psi}_i\ket{\phi}_j\ket{\varphi}_T.
\]
Assume $\alpha, \beta \neq 0$. Then because $U(\ket{0}_iV_{0,x}\ket{\phi}_j)$ and $U(\ket{1}_iV_{1,x}\ket{\phi}_j)$ are orthogonal, we can write $U(\ket{0}_iV_{0,x}\ket{\phi}_j) = \ket{u}_i\ket{\phi}_j$ and $U(\ket{1}_iV_{1,x}\ket{\phi})_j) = \ketp{u}_i\ket{\phi}_j$, for some state $\ket{u}$. But because $\{\ket{0}\ket{\phi}, \ket{1}\ket{\phi}, \ket{0}\ketp{\phi}, \ket{1}\ketp{\phi} \}$ forms an orthonormal basis for the two qubits, this implies that 
\[
	\ket{0}_i\ketp{\phi}_j \mapsto U(\ket{0}_iV_{0,x}\ketp{\phi}_j) = \ket{\tilde{u}}_i\ketp{\phi}_j
\]
\[
	\ket{1}_i\ketp{\phi}_j \mapsto U(\ket{1}_iV_{1,x}\ketp{\phi}_j) = \ketp{\tilde{u}}_i\ketp{\phi}_j
\]
for some state $\ket{\tilde{u}}$. 
\\\\
Define $E_x$ and $F_x$ by their actions on basis states: 
\[
\begin{matrix}
E_x\ket{0} = \ket{u} & \quad & F_x\ket{0} = \ket{\tilde{u}} \\
E_x\ket{1} = \ketp{u} & \quad & F_x\ket{1} = \ketp{\tilde{u}} 
\end{matrix},
\]
and let $\mathcal{D}$ be the basis $\{\ket{\phi}, \ketp{\phi}\}$. Then $U(\CU{A}{V_{0,x}}{V_{1,x}}) = \UC{D}{E_x}{F_x}$, implying that we can decompose $U$ as 
\[
	U = \CU{A}{V^\dag_{0,x}}{V^\dag_{1,x}} \UC{D}{E_x}{F_x},
\]
(See Figure \ref{fig:lem_5_proof} for clarity.) 
\begin{figure}[htbp]
\begin{center}
\includegraphics[scale=1.0]{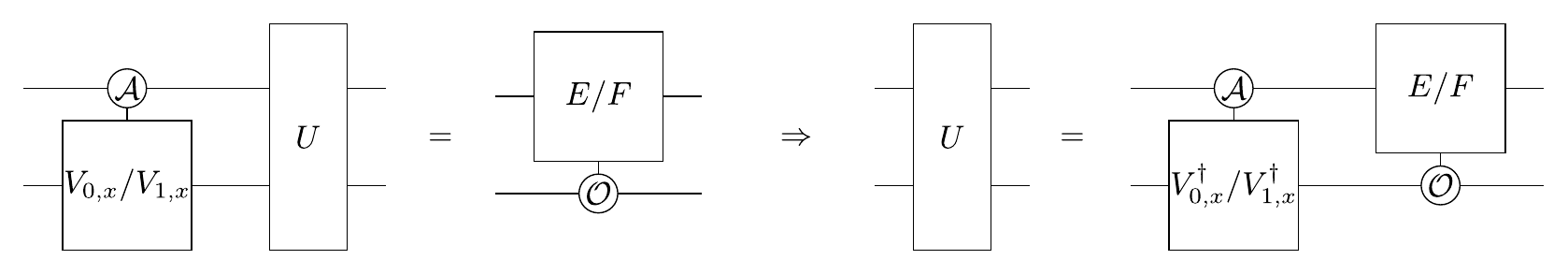}
\caption{$U(\CU{A}{V_{0,x}}{V_{1,x}}) = \CU{O}{E}{F}$ implies that $U = \CU{A}{V^\dag_{0,x}}{V^\dag_{1,x}} \CU{O}{E}{F}$.}
\label{fig:lem_5_proof}
\end{center}
\end{figure}

We require the circuit $UC_x$ to have a product eigenbasis for all $x$. In particular, the circuit formed by applying $U$ after $\CU{A}{V_{0,x'}}{V_{1,x'}}$ must have a product eigenbasis. Restricting our attention to qubits $i$ and $j$, the unitary acting on these two qubits must be a basis-controlled unitary (by Lemma \ref{lem:single_2_qubit}). As we have just shown, this circuit can be written as  $\CU{A}{V^\dag_{0,x}V_{0,x'}}{V^\dag_{1,x}V_{1,x'}} \UC{D}{E_x}{F_x}$.
\\\\
In the case $[E_x,F_x]\neq 0$, then both $V^\dag_{0,x}V_{0,x'}$ and $V^\dag_{1,x}V_{1,x'}$ must be diagonal in the $\mathcal{D}$ basis for all $x,x'$, allowing us to re-write the circuit as a single control unitary, which can be controlled on either of the 2 qubits by Lemma \ref{lem:lem_3}.
\\\\
In the case that $[E,F]=0$, they must be diagonal in the same basis. Supposing that this basis is $\mathcal{B}$, then the circuit can be written as $\CU{A}{V^\dag_{0,x}V_{0,x'}}{V^\dag_{1,x}V_{1,x'}} \CU{B}{D}{D'}$ for some $D, D'$ such that $[D,D']=0$, which is only a valid control-unitary (which indeed it must be, by Lemma \ref{lem:single_2_qubit}) if $\mathcal{A} = \mathcal{B}$. This circuit has an eigenbasis with the $i$th qubit always set to $\ket{0}_i$ or $\ket{1}_i$ (since the basis of qubit $i$ must be $\mathcal{A}$). 
\\\\
In summary, if $i$ is a control line, then $j$ can be its only target, and applying a unitary $U$ to $i$ and $j$ either yields a unitary with an eigenbasis in which the $i$th line is always $\ket{0}_i$ or $\ket{1}_i$, or else $V^\dag_{0,x}V_{0,x'}$ and $V^\dag_{1,x}V_{1,x'}$ are diagonal in basis $\mathcal{D}$ for all $x,x'$, and the unitary $U$ is of the form $\CU{A}{V^\dag_{0,x}V_{0,x'}}{V^\dag_{1,x}V_{1,x'}} \CU{B}{D}{D'}$ for some fixed $x$ and commuting unitaries $D, D'$. Choosing $D_0 = V^\dag_{0,x}$ and $D_1 = V^\dag_{1,x}$ suffices to prove the result. 

\end{proof}
%%%%%%%%%%%%

%%%%%%%%%%%%
\begin{corollary}\label{cor:2}
Except in case (ii) of Lemma \ref{lem:5}, the unitary $U$ applied to the circuit must be a basis-controlled unitary, controlled by qubit $i$ and acting on qubit $j$. 
\end{corollary}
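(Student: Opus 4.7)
The plan is to unpack the case analysis already carried out inside the proof of Lemma \ref{lem:5}. That proof established, after absorbing the prior circuit's action on qubits $i,j$ on the right of $U$, that $U$ admits the factorisation $U = \CU{A}{V^\dag_{0,x}}{V^\dag_{1,x}} \UC{D}{E_x}{F_x}$ for some basis $\mathcal{D}$ and some unitaries $E_x, F_x$, and then split into two sub-cases according to whether $[E_x,F_x]$ is zero or non-zero. The non-zero case is precisely case (ii) of Lemma \ref{lem:5}, so by hypothesis I may assume we are in the commuting case $[E_x,F_x] = 0$.

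First I would observe that in the commuting case, $E_x$ and $F_x$ share an eigenbasis $\mathcal{B}$, so the right factor $\UC{D}{E_x}{F_x}$ can be rewritten as a basis-controlled unitary controlled in basis $\mathcal{B}$, say $\CU{B}{D}{D'}$ for some commuting $D, D'$. I would then invoke the final observation at the end of the proof of Lemma \ref{lem:5}: for the composed 2-qubit circuit $\CU{A}{V^\dag_{0,x}V_{0,x'}}{V^\dag_{1,x}V_{1,x'}} \CU{B}{D}{D'}$ to itself be a basis-controlled unitary on qubits $i,j$ (which it must be, by Lemma \ref{lem:single_2_qubit}), the two control bases have to coincide, forcing $\mathcal{A} = \mathcal{B}$.

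At this point the corollary is essentially immediate: I would compose the two basis-controlled unitaries $\CU{A}{V^\dag_{0,x}}{V^\dag_{1,x}}$ and $\CU{A}{D}{D'}$, both of which are now controlled on qubit $i$ in the same basis $\mathcal{A}$, to obtain
\[
U \;=\; \CU{A}{V^\dag_{0,x} D}{V^\dag_{1,x} D'},
\]
a basis-controlled unitary with control on qubit $i$ (in basis $\mathcal{A}$) and target qubit $j$, exactly as claimed. I would also briefly flag that the first half of the proof of Lemma \ref{lem:5} already guarantees that $j$ is the only target on which $i$ acts non-trivially, so there is no hidden way for $U$ to behave as a control on some third qubit.

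The main ``obstacle'' is really just bookkeeping: one has to keep careful track of which basis — $\mathcal{A}$ or $\mathcal{B}$ — actually controls the gate once $[E_x,F_x] = 0$ allows $\UC{D}{E_x}{F_x}$ to be rewritten with its control on the other qubit. Once Lemma \ref{lem:single_2_qubit} is used to force $\mathcal{A} = \mathcal{B}$, the two basis-controlled factors share a common control line and basis, and their product is block-diagonal in $\mathcal{A}$ on qubit $i$, so the composition collapses to a single basis-controlled unitary of the desired form with no further computation required.
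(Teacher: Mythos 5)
Your proposal has a genuine gap: it presupposes that the decomposition $U = \CU{A}{V^\dag_{0,x}}{V^\dag_{1,x}} \UC{D}{E_x}{F_x}$ is available whenever case (ii) fails. But that decomposition --- and with it the very definition of $E_x$ and $F_x$ --- was derived in the proof of Lemma \ref{lem:5} only under the assumption that some product eigenvector of $UC_x$ has qubit $i$ in a genuine superposition $\alpha\ket{0}_i+\beta\ket{1}_i$ with $\alpha,\beta\neq 0$. The negation of case (ii) is case (i): there is a product eigenbasis in which line $i$ is always $\ket{0}_i$ or $\ket{1}_i$. In the generic instance of case (i) --- for example when $UC_x$ restricted to qubits $i,j$ equals $\CU{A}{B'}{C'}$ with $[B',C']\neq 0$, so that the pinned basis is the \emph{only} product eigenbasis --- no eigenvector with $\alpha,\beta\neq 0$ exists, $E_x$ and $F_x$ are undefined, and ``the commuting case $[E_x,F_x]=0$'' is not a meaningful hypothesis. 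This is not a fringe situation; it is precisely the typical case the corollary is meant to cover. What you prove is correct for the degenerate sub-case in which the decomposition does exist and $[E_x,F_x]=0$ (and there it just retraces the tail of the proof of Lemma \ref{lem:5}), but that sub-case forces $UC_x|_{ij}$ into the commuting form $\CU{A}{B}{B'}$ and carries little of the corollary's content.

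The paper's proof instead starts directly from the conclusion of case (i): take an eigenvector $\ket{0}_i\ket{\psi}_j\ket{x}_S\ket{\phi}_T$ from the pinned basis, push it through $C$ and then $U$, and note that for the image to be proportional to the original vector, $U$ must send $\ket{0}_i V_{0,x}\ket{\psi}_j$ to a multiple of $\ket{0}_i\ket{\psi}_j$, and likewise for $\ket{0}_i\ketp{\psi}_j$, $\ket{1}_i\ket{\theta}_j$ and $\ket{1}_i\ketp{\theta}_j$. Since these four vectors span the two-qubit space, $U$ preserves the subspaces $\ket{0}_i\otimes\mathbb{C}^2$ and $\ket{1}_i\otimes\mathbb{C}^2$, i.e.\ $U=\proj{0}_i\otimes B+\proj{1}_i\otimes C=\CU{A}{B}{C}$. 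To repair your argument, replace the appeal to $[E_x,F_x]=0$ with this eigenvector-tracking step, or at least add it as the missing (and main) case.
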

\begin{proof}
As before, consider the eigenbasis of $UC_x$ for some $x$. We know that in the case we are considering, any eigenvector from this basis has either $\ket{0}_i$ or $\ket{1}_i$ on the $i$th line. Without loss of generality, choose an eigenvector with a $\ket{0}$ on this qubit: $\ket{0}_i\ket{\psi}_j\ket{x}_S\ket{\phi}_T$. The action of the circuit on this state is 
\begin{eqnarray*}
	\ket{0}_i\ket{\psi}_j\ket{x}_S\ket{\phi}_T &\mapsto& \ket{0}_i V_{0,x}\ket{\psi}_j \ket{x}_S W_{0,x}\ket{\phi}_T \\
	&\mapsto& U(\ket{0}_i V_{0,x}\ket{\psi}_j) \ket{x}_S W_{0,x}\ket{\phi}_T \\
	&=& e^{i\varphi} \ket{0}_i\ket{\psi}_j\ket{x}_S\ket{\phi}_T
\end{eqnarray*}
for some angle $\varphi$.
This implies that $\ket{\phi}_T$ is an eigenvector of $W_{0,x}$. Moreover, $\ket{0}_i\ketp{\psi}_j\ket{x}_S\ket{\phi}_T$ must also be an eigenvector of $UC_x$, as must $\ket{1}_i\ket{\theta}_j\ket{x}_S\ket{\phi}_T$ and $\ket{1}_i\ketp{\theta}_j\ket{x}_S\ket{\psi}_T$, for some $\ket{\theta}$. This implies that $U = \proj{0}_i \otimes B + \proj{1}_i \otimes C = \CU{A}{B}{C}$ , with $\ket{\psi}, \ketp{\psi}$ eigenvectors of $BV_{0,x}$ and $\ket{\theta}, \ketp{\theta}$ eigenvectors of $CV_{1,x}$.
\end{proof}
%%%%%%%%%%%%

The results so far can be summed up by the following theorem. 
%%%%%%%%%%%%
\begin{theorem}
Applying a 2-qubit gate $U$ to qubits $i$ and $j$ after applying the circuit $C$, whilst maintaining that the final circuit $UC$ has a product eigenbasis, places the following constraints on $U$:

\begin{table}[h]
\hspace{-1.5cm}
\begin{tabular}{llll}
\multicolumn{1}{c}{$i$}          & \multicolumn{1}{c}{$j$}                              & \multicolumn{1}{c}{$U$}                       & \multicolumn{1}{c}{Proof}                                                         \\ \hline
Control with basis $\mathcal{A}$ & Control with basis $\mathcal{B}$                     & Diagonal in $\mathcal{A} \otimes \mathcal{B}$ & Corollary \ref{cor:2}                                                                                 \\
Control with basis $\mathcal{A}$ & Ambiguous (with basis $\mathcal{B}$)                 & basis-controlled unitary in basis $\mathcal{A}$        & Corollary \ref{cor:2}                                                                                 \\
Control with basis $\mathcal{A}$ & Free                                                 & basis-controlled unitary in basis $\mathcal{A}$        & \begin{tabular}[c]{@{}l@{}} Corollary \ref{cor:2}\end{tabular} \\
Control with basis $\mathcal{A}$ & Target (Except in special case of Lemma \ref{lem:5}) & basis-controlled unitary in basis $\mathcal{A}$        & \begin{tabular}[c]{@{}l@{}} Corollary \ref{cor:2}\end{tabular}
\end{tabular}
\end{table}
\end{theorem}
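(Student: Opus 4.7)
The plan is to treat the theorem as a consolidation of the preceding lemmas. In every row, qubit $i$ is a control line in basis $\mathcal{A}$, so Lemma \ref{lem:5} applies to the newly-added gate $U$ on $(i,j)$, with the spectator qubits fixed in their assigned bases via Lemma \ref{lem:4}. The special case (ii) of Lemma \ref{lem:5} is either excluded by hypothesis (Control--Target row) or cannot trigger in the first place, since it requires $i$ to have acted non-trivially as a control on $j$ in the existing circuit $C$, which contradicts $j$'s classification as Ambiguous, Free, or Control. Hence case (i) of Lemma \ref{lem:5} holds in all four rows, and Corollary \ref{cor:2} delivers the factorization $U = \CU{A}{B}{C}$ for some single-qubit unitaries $B, C$. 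This immediately proves the Control--Ambiguous, Control--Free, and (non-special) Control--Target rows, since each simply asserts that $U$ is a basis-controlled unitary controlled on $i$ in basis $\mathcal{A}$.

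For the Control--Control row I would then reapply the same reasoning with the roles of $i$ and $j$ swapped. Since $j$ is also a control line in its own basis $\mathcal{B}$, and since Definition \ref{def:types_of_qubit} forbids a control line from being acted upon as a genuine target, case (ii) of Lemma \ref{lem:5} is once again excluded in the swapped direction, yielding $U = \UC{B}{K}{L}$ for some single-qubit unitaries $K, L$. Equating the two factorizations $\CU{A}{B}{C} = \UC{B}{K}{L}$ and testing both sides on the four product basis vectors $\ket{a}\ket{b}$, $\ket{a}\ketp{b}$, $\ketp{a}\ket{b}$, $\ketp{a}\ketp{b}$ forces $B$ and $C$ to share $\mathcal{B}$ as a common eigenbasis (equivalently $K$ and $L$ to share $\mathcal{A}$), so $U$ is diagonal in $\mathcal{A} \otimes \mathcal{B}$, which is precisely the claim made in the Control--Control row.

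The main obstacle is the Control--Control row: one must carefully justify the swapped second invocation of Lemma \ref{lem:5} and verify that the intersection of the two basis-controlled forms really does collapse to a gate diagonal in $\mathcal{A} \otimes \mathcal{B}$. The former relies on the structural observation, baked into the control-line definition, that a control qubit is never acted on as a genuine target; the latter is a short computation on four product basis vectors. Once these two points are in hand, the remaining three rows follow immediately from Corollary \ref{cor:2} together with the case-(ii) exclusions noted above, so no additional technical machinery beyond what has already been developed is required.
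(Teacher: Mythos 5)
Your proposal is correct and follows essentially the same route as the paper, which gives no separate argument for this theorem beyond pointing each row at Corollary \ref{cor:2} (itself resting on Lemma \ref{lem:5}); your observation that case (ii) of Lemma \ref{lem:5} cannot arise when $j$ is not a target line of $i$ is exactly the intended reason the corollary applies unconditionally in the first three rows. Your explicit two-sided application of the corollary in the Control--Control row, forcing $U$ to be diagonal in $\mathcal{A}\otimes\mathcal{B}$, is a detail the paper leaves implicit but is the right way to justify that entry.
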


%%%%%%%%%%%%

We now deal with the constraints placed on gates that act on ambiguous qubits. 

%%%%%%%%%%%%
\begin{lemma}\label{lem:7}
Suppose that the circuit $C$ has already been applied, and now a 2-qubit gate $U$ is applied to qubits $i$ and $j$, which are both ambiguous in bases $\mathcal{A}$ and $\mathcal{B}$, respectively. Then in order for $UC$ to have a product eigenbasis, $U$ must be a basis-controlled unitary controlled on either $i$, in basis $\mathcal{A}$; $j$, in basis $\mathcal{B}$; or both. 
\end{lemma}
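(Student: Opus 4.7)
The plan is to parallel the inductive argument of Lemma~\ref{lem:5} and Corollary~\ref{cor:2}, treating each of $i$ and $j$ as a ``control-like'' line in bases $\mathcal{A}$ and $\mathcal{B}$, respectively. By Lemma~\ref{lem:4}, any product eigenvector of $UC$ has every qubit in $S$ -- the set of control, ambiguous, and free lines other than $i,j$ -- in a basis state of its associated basis, so writing $C = \sum_x C_x \otimes \proj{x}_S$ with $C_x$ acting on $\{i,j\} \cup T$ (where $T$ denotes the set of target lines), it suffices to show that, for each $x$, $UC_x$ having a product eigenbasis forces $U$ into the advertised form. The inductive hypothesis that $C$ has a product eigenbasis placing $i$ in $\mathcal{A}$ and $j$ in $\mathcal{B}$, combined with the fact that the combined gates involving the ambiguous lines $i,j$ are diagonal in the joint ambiguous/control basis, lets me decompose
\[
	C_x \;=\; \sum_{\alpha,\beta\in\{0,1\}} \ketbra{a_\alpha b_\beta}{a_\alpha b_\beta}_{ij} \otimes W_{\alpha\beta,x}
\]
for some unitaries $W_{\alpha\beta,x}$ on $T$.

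Next, I would take a product eigenvector $\ket{\chi}_i\ket{\eta}_j\ket{\varphi}_T$ of $UC_x$ with eigenvalue $e^{i\omega}$ (writing $\ket{\chi}=\sum_\alpha \chi_\alpha \ket{a_\alpha}$ and $\ket{\eta} = \sum_\beta \eta_\beta \ket{b_\beta}$), expand the eigenvalue equation
\[
	\sum_{\alpha,\beta}\chi_\alpha\eta_\beta\, U(\ket{a_\alpha b_\beta})\otimes W_{\alpha\beta,x}\ket{\varphi} \;=\; e^{i\omega}\ket{\chi}\ket{\eta}\ket{\varphi},
\]
match coefficients in an orthonormal basis of $T$, and use the invertibility of $U$ (in the spirit of Lemma~\ref{lem:product}) to deduce that $\ket{\varphi}$ is an eigenvector of every $W_{\alpha\beta,x}$ for which $\chi_\alpha\eta_\beta\neq 0$, with some eigenvalue $e^{i\phi_{\alpha\beta,x}}$. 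Packaging these phases into the diagonal operator $D_x := \sum_{\alpha\beta} e^{i\phi_{\alpha\beta,x}} \ketbra{a_\alpha b_\beta}{a_\alpha b_\beta}$ -- diagonal in $\mathcal{A}\otimes\mathcal{B}$ by construction -- the condition collapses to the purely two-qubit statement
\[
	(UD_x)\ket{\chi}\ket{\eta} \;=\; e^{i\omega}\ket{\chi}\ket{\eta}.
\]
Running this over the four eigenvectors of the product eigenbasis of $UC_x$ yields a product eigenbasis for $UD_x$, so Lemma~\ref{lem:single_2_qubit} forces $UD_x$ itself to be a basis-controlled unitary on $\{i,j\}$.

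The core step is then to pin down the controlling basis of $UD_x$ as $\mathcal{A}$ on qubit $i$ or $\mathcal{B}$ on qubit $j$ -- never an unrelated basis. My plan here is to mimic the analysis in Lemma~\ref{lem:5} and Corollary~\ref{cor:2}, letting $i$ play the role of a control qubit in $\mathcal{A}$ (and symmetrically $j$ in $\mathcal{B}$): focusing on eigenvectors of the form $\ket{a_\alpha}\ket{\eta}$ forces the controlling basis of $UD_x$ to agree with $\mathcal{A}$ on $i$ (outside of a Lemma~\ref{lem:5}-style special case), and symmetrically $\ket{\chi}\ket{b_\beta}$ pins down $\mathcal{B}$ on $j$. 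Once $UD_x = \CU{A}{P_x}{Q_x}$ is established, the diagonal-in-$\mathcal{A}\otimes\mathcal{B}$ factor $D_x^{-1}$ can be absorbed into the single-qubit blocks of the basis-controlled form, yielding $U = \CU{A}{P}{Q}$; the symmetric argument gives $U = \UC{B}{P}{Q}$; and the situation in which both orientations simultaneously hold corresponds, by the first observation after Definition~\ref{def:control}, to $U$ being diagonal in $\mathcal{A}\otimes\mathcal{B}$ -- the ``both'' option of the statement.

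The main obstacle I anticipate is handling the degenerate sub-cases that parallel the special cases of Lemma~\ref{lem:5}: either some coefficients $\chi_\alpha\eta_\beta$ vanish so that the simultaneous-eigenvector constraint on $\ket{\varphi}$ becomes too weak to propagate to a generic diagonal $D_x$, or the family $\{W_{\alpha\beta,x}\}$ shares more common eigenstructure than expected so that the accessible diagonals are too few to rule out an exotic control basis for $UD_x$. As in Lemma~\ref{lem:5}, I expect these borderline cases to collapse into one of the three advertised forms by a careful rewriting of the associated two-qubit unitary, and would mirror the case analysis of that proof verbatim.
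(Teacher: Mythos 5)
Your reduction in the first half is sound, and in fact makes explicit something the paper only asserts structurally: by conditioning on the state $\ket{x}$ of the other lines and matching coefficients as in Lemma \ref{lem:product}, you correctly arrive at the statement that $UD_x$ must be a basis-controlled unitary on qubits $i,j$ for a family of gates $D_x$ diagonal in $\mathcal{A}\otimes\mathcal{B}$. This is exactly the paper's launching point (there the diagonal is called $P_{\mathcal{A}\mathcal{B}_x}$ and is read off from the definition of ambiguous lines rather than derived from the eigenvalue equation). The gap is in your ``core step''. You propose to pin the control basis of $UD_x$ to $\mathcal{A}$ on $i$ (or $\mathcal{B}$ on $j$) by ``focusing on eigenvectors of the form $\ket{a_\alpha}\ket{\eta}$'', mimicking Lemma \ref{lem:5} and Corollary \ref{cor:2}. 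But that mechanism is precisely what is unavailable for ambiguous lines: in Lemma \ref{lem:5} the eigenvectors are forced to have qubit $i$ in an $\mathcal{A}$-basis state because $i$ is a genuine control, i.e.\ $[V_{0,x},V_{1,x}]\neq 0$ for some $x$; an ambiguous line by definition fails this, and nothing a priori forces the eigenvectors of $UD_x$ to have qubit $i$ in $\mathcal{A}$ --- if such eigenvectors were guaranteed to exist, the conclusion would already follow. So the eigenvectors you want to focus on need not exist, and the sketched argument does not get off the ground.

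The step that actually closes the argument --- and is the heart of the paper's proof --- is a comparison \emph{across} different values of $x$, which your write-up gestures at but never states: writing $UD_x = \CU{E}{H_x}{K_x}$ for one value of $x$, the requirement that $UD_{x'} = \CU{E}{H_x}{K_x}\,(D_x^\dag D_{x'})$ also be a basis-controlled unitary for every other $x'$, where $D_x^\dag D_{x'}$ is a (generically nontrivial) gate diagonal in $\mathcal{A}\otimes\mathcal{B}$, forces either $\mathcal{E}=\mathcal{A}$, or $H_x, K_x$ diagonal in $\mathcal{B}$, or both. When all the $D_x$ coincide this constraint degenerates and the control basis is not pinned down by this route at all; that is the special case the paper treats separately, and your closing paragraph anticipates such degeneracies without offering a mechanism for them. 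A smaller wrinkle: your $D_x$ depends on the chosen target eigenvector $\ket{\varphi}$ as well as on $x$, so ``running this over the four eigenvectors'' to obtain a single product eigenbasis of a single $UD_x$ needs a short argument that four eigenvectors with compatible phase data can be selected.
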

\begin{proof}
Once again, we will consider the action of the circuit $UC_x$ for some $x$. We know that so far all gates will have acted on line $i$ in basis $\mathcal{A}$ and line $j$ in basis $\mathcal{B}$. There may also have been some unitaries acting between them of the form $\CU{A}{B}{B'}$ (for commuting $B, B'$), which we can combine into a single gate $P_{{\mathcal{A}\mathcal{B}}_x}$ diagonal in $\mathcal{A} \otimes \mathcal{B}$. The resulting circuit $UC_x$, when restricting attention to qubits $i$ and $j$, is $U P_{{\mathcal{A}\mathcal{B}}_x}$. Since this circuit is promised to have a product eigenbasis, it must be of control-unitary form, which (wlog) we will write as $\CU{E}{H_x}{K_x}$. This implies that 
$U = \CU{E}{H_x}{K_x} {{P_{\mathcal{A}\mathcal{B}}}_x}^\dag$ (see Figure \ref{fig:ambiguous_full} for clarity).

\begin{figure}
\hspace{-2cm}
    \includegraphics[scale=1.0]{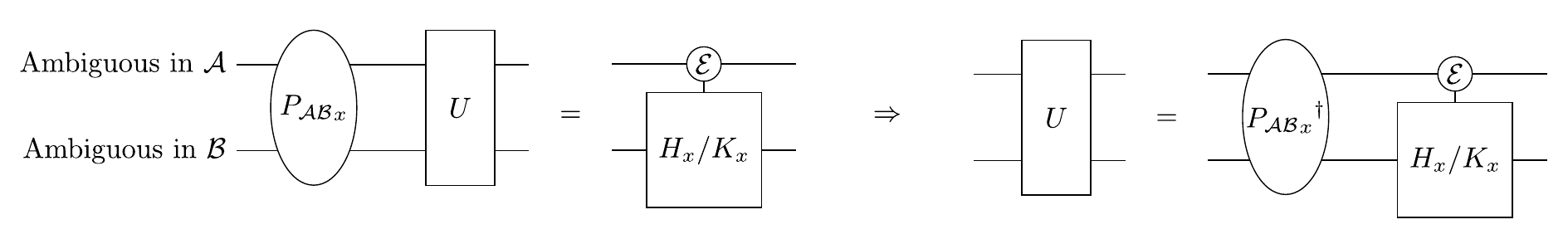}
    \caption{$U P_{{\mathcal{A}\mathcal{B}}_x} = \CU{E}{H_x}{K_x}$ implies that $U = \CU{E}{H_x}{K_x} {{P_{\mathcal{A}\mathcal{B}}}_x}^\dag$}    \label{fig:ambiguous_full}
\end{figure}

But since the condition must hold for other values of $x$, this implies that for all $x' \neq x$, 
$UC_{x'} = \CU{E}{H_x}{K_x} {{P_{\mathcal{A}\mathcal{B}}}_x}^\dag {{P_{\mathcal{A}\mathcal{B}}}_{x'}}^\dag.$ (shown in Figure \ref{fig:ambiguous_4} for clarity).
If $P_{{\mathcal{A}\mathcal{B}}_x} = P_{{\mathcal{A}\mathcal{B}}_{x'}}$ for all $x,x'$, then these cancel, and we have $U = \CU{E}{H_x}{K_x}$. 

\begin{figure}
\centering
    \includegraphics[scale=1.0]{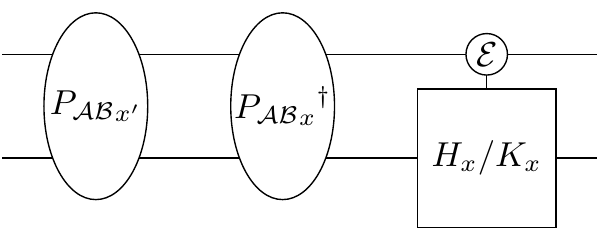}
    \caption{$\CU{E}{H_x}{K_x} {{P_{\mathcal{A}\mathcal{B}}}_x}^\dag {{P_{\mathcal{A}\mathcal{B}}}_{x'}}^\dag.$}    \label{fig:ambiguous_4}
\end{figure}

Otherwise, the two unitaries $P_{{\mathcal{A}\mathcal{B}}_{x}}$ and $P_{{\mathcal{A}\mathcal{B}}_{x'}}$ will combine to form some circuit $\CU{E}{H_x}{K_x} {P_{{\mathcal{A}\mathcal{B}}_{x}}}^\dag P_{{\mathcal{A}\mathcal{B}}_{x'}} =: \CU{E}{H_x}{K_x} P_{{\mathcal{A}\mathcal{B}}_{xx'}}$, which must have a product eigenbasis and therefore be a control-unitary itself (by Lemma \ref{lem:single_2_qubit}). We can write the unitary $P_{{\mathcal{A}\mathcal{B}}_{xx'}}$ in control-unitary form as $\CU{A}{P_{{\mathcal{A}\mathcal{B}}_{xx'}}^{(a)}}{P_{{\mathcal{A}\mathcal{B}}_{xx'}}^{(a^\perp)}}$, where ${P_{{\mathcal{A}\mathcal{B}}_{xx'}}^{(b)}}$ is the action of $P_{{\mathcal{A}\mathcal{B}}_{xx'}}$ on qubit $j$ when qubit $i$ is in state $\ket{b} \in \{\ket{a},\ketp{a}\}$. For this circuit to be a valid basis-controlled unitary, then either $\mathcal{A} = \mathcal{E}$, or $H_x$ and $K_x$ are diagonal in $\mathcal{B}$, or both. 

In the first case, for some $x'$, the resulting circuit must be a control-unitary of the form $\CU{A}{U^{(a)}_{x'}}{U^{(a^\perp)}_{x'}}$. In the second case, where $H_x$ and $K_x$ are diagonal in $\mathcal{B}$, then the circuit is of the form $\UC{B}{U^{(b)}_{x'}}{U^{(b^\perp)}_{x'}}$, where in both cases we define $U^{(a)}_{x'} = H_x P_{{\mathcal{A}\mathcal{B}}_{xx'}}$, and the others are defined similarly. If both $H_x$ and $K_x$ are diagonal in $\mathcal{B}$, and $\mathcal{E}=\mathcal{A}$, then the circuit is of the form $\CU{A}{B}{B'}$ for commuting $B, B'$, and the qubits $i$ and $j$ remain ambiguous in bases $\mathcal{A}$ and $\mathcal{B}$, respectively.

If, instead, only one of the conditions holds, then if $[U^{(a)}_{x}, U^{(a^\perp)}_{x}] \neq 0$ for all $x$, then the control line (which is either $i$ or $j$ above) becomes a true control line in basis $\mathcal{A}$ (as per Definition \ref{def:types_of_qubit}), and the target line becomes a true target line. Otherwise, qubits $i$ and $j$ remain ambiguous in their respective bases. 
\end{proof}
%%%%%%%%%%%%

The remaining cases can be summarised by the following theorem.
%%%%%%%%%%%%
\begin{theorem}\label{lem:10}
Applying a 2-qubit gate $U$ to qubits $i$ and $j$, after applying the circuit $C$, whilst maintaining that the final circuit $UC$ has a product eigenbasis places the following constraints on $U$:

\begin{table}[h]
\begin{tabular}{c|c|c}
$i$    & $j$                        & $U$                                                                                                                                           \\ \hline
Target & Target                     & \begin{tabular}[c]{@{}c@{}}Does not exist unless \\ condition (i) below is satisfied\end{tabular}                                               \\
Target & Free                       & \begin{tabular}[c]{@{}c@{}}Control on $j$ in any basis\\ (Or control on $i$ if condition (ii) below is satisfied)\end{tabular}                   \\
Target & Ambiguous in $\mathcal{B}$ & \begin{tabular}[c]{@{}c@{}}Control on $j$ in basis $\mathcal{B}$\\ (Or a control in any basis if condition (iii) below is satisfied)\end{tabular} \\
Free   & Free                       & Control on either $i$ or $j$, in any basis.                                                                                                  
\end{tabular}
\end{table}

\end{theorem}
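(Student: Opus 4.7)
The plan is to dispatch the four rows of the table by reducing in each case to the restriction of $UC$ to the qubits $(i,j)$, conditioned on the fixed-basis lines outside $\{i,j\}$. Let $S$ be the set of control, ambiguous, and free lines other than $i$ and $j$, and $S' = [n]\setminus S$. Lemma \ref{lem:4} tells us that any product eigenbasis of $UC$ has the form $\{\ket{\phi}_{S'}\otimes \ket{x}_S\}$, so writing $C=\sum_x (C_x)_{S'}\otimes\proj{x}_S$, it suffices to ensure that each $UC_x$ has a product eigenbasis on $S'$. An application of Lemma \ref{lem:product} to the target lines in $S'$ other than $i,j$ further reduces the question to: for each $x$, is the 2-qubit unitary $UC_x|_{i,j}$ a basis-controlled unitary? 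Lemma \ref{lem:single_2_qubit} then dictates the form of $U$, and the remaining task is to check compatibility as $x$ varies.

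The free--free and target--free rows follow almost immediately. In the free--free case, $C_x|_{i,j}=W\otimes V$ is $x$-independent, so $U(W\otimes V)$ may be any basis-controlled unitary and $U=\CU{A}{K}{H}(W^\dagger\otimes V^\dagger)$ (or the other orientation) works in any basis. In the target--free case, $C_x|_{i,j}=V_{i,x}\otimes W$; controlling the new gate on the free qubit $j$ yields a residual gate $\UC{A}{H}{K}$ after absorbing $V_{i,x}\otimes W$, which is always a valid choice. The alternative of controlling on the target side requires the $V_{i,x}$ to cooperate, which is precisely condition (ii), and is handled by an argument directly mirroring Lemma \ref{lem:5}. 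The target--ambiguous case is analogous: the ambiguous line $j$ in basis $\mathcal{B}$ has accumulated only unitaries diagonal in $\mathcal{B}$, so $C_x|_{i,j}=V_{i,x}\otimes P_{j,x}$ with $P_{j,x}$ diagonal in $\mathcal{B}$, and the generic solution is a control-unitary $\UC{B}{H}{K}$ in the same basis $\mathcal{B}$, with condition (iii) characterizing when the roles of $i$ and $j$ can be inverted.

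The main obstacle is the target--target row, where neither qubit carries a fixed basis and condition (i) must be extracted from scratch. Here $C_x|_{i,j}=V_{i,x}\otimes V_{j,x}$, and one requires $U(V_{i,x}\otimes V_{j,x})$ to be a basis-controlled unitary $\CU{E_x}{K_x}{H_x}$ for every $x$. Since $U$ is fixed while the data $V_{i,x},V_{j,x}$ vary, this forces tight compatibility relations between the control basis $\mathcal{E}_x$ and the pair $(V_{i,x},V_{j,x})$; generically these relations are unsatisfiable, which yields the default ``no $U$ exists'' conclusion. Condition (i) is precisely the condition under which the relations do admit a solution, and in that case one verifies (after placing the control on qubit $i$ without loss of generality) that $U$ must take the form $\CU{A}{K}{H}(W_x^\dagger\otimes\mathbb{I})$, promoting $i$ to a control in basis $\mathcal{A}$ while leaving $j$ a target. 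I expect the delicate step to be isolating condition (i) in a usable (and, when applicable, efficiently checkable) form; the argument here runs parallel to the dichotomy in Lemma \ref{lem:5}, now applied symmetrically to two target lines rather than to a single control--target pair.
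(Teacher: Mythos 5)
Your proposal follows essentially the same route as the paper's proof: reduce via Lemma \ref{lem:4} to the two-qubit restrictions $UC_x|_{i,j} = U(W_x\otimes V_x)$, force each of these to be a basis-controlled unitary by Lemma \ref{lem:single_2_qubit}, and then extract conditions (i)--(iii) from compatibility as $x$ varies, exactly as the paper does. The one piece you leave open --- the explicit form of condition (i) --- drops out of the analysis you already set up: writing $U = \CU{A}{B_x}{C_x}(W_x^\dagger\otimes V_x^\dagger)$ for one value of $x$ and demanding that $\CU{A}{B_x}{C_x}(W_x^\dagger W_{x'}\otimes V_x^\dagger V_{x'})$ remain a basis-controlled unitary for every $x'$ forces (via Lemma \ref{lem:single_qubit_gates}) that either $W_x^\dagger W_{x'}$ is diagonal in a fixed basis $\mathcal{A}$ for all $x,x'$, or $V_x^\dagger V_{x'}$ is diagonal in a fixed basis $\mathcal{B}$ for all $x,x'$, which is precisely the paper's condition (i).
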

\begin{proof}
In all of the above cases, we can assume that there have been no unitaries acting between qubits $i$ and $j$ so far (else, as we will see, none of the above cases will exist). Therefore we can view the circuit $C_x$ so far as acting with a unitary $W_x$ on qubit $i$ and a unitary $V_x$ on qubit $j$ (in some cases the dependence on $x$ will be redundant). 

Similarly to before, we can make the following set of claims. The (2-qubit) unitary acting on qubits $i$ and $j$ after applying the new gate $U$ must have a product eigenbasis, and therefore be a basis-controlled unitary (by Lemma \ref{lem:single_2_qubit}). Writing this basis-controlled unitary as $\CU{A}{B_x}{C_x}$, we must have that $U (W_x \otimes V_x) = \CU{A}{B_x}{C_x}$, and therefore $U = \CU{A}{B_x}{C_x} (W^\dag_x \otimes V^\dag_x)$. Once again, the product eigenbasis condition must hold for other values of $x$. In particular, for any $x'$, we must have that $\CU{A}{B}{C} (W^\dag_x \otimes V^\dag_x) (W_{x'} \otimes V_{x'})$ is a basis-controlled unitary. 

If $[B_x, C_x] \neq 0$, then Lemma \ref{lem:single_qubit_gates} implies that $W^\dag_x W_{x'}$ is diagonal in basis $\mathcal{A}$ for all $x, x'$. If $[B_x, C_x] = 0$ for all $x, x'$ (and hence share an eigenbasis $\mathcal{B}$), then either $W^\dag_x W_{x'}$ is diagonal in basis $\mathcal{A}$, or $V^\dag_x V_{x'}$ is diagonal in basis $\mathcal{B}$. If both unitaries are diagonal in their respective bases, then $UC_x$ acts as two single-qubit gates on both $i$ and $j$. 
\\\\
Now we deal with the specific cases stated in the theorem. 
\paragraph{Case 1 -- target : target} As we just saw, $UC$ can only have a product eigenbasis if, for all $x, x'$, either $W^\dag_x W_{x'}$ or $V^\dag_x V_{x'}$ is diagonal in some basis $\mathcal{A}$ or $\mathcal{B}$, respectively. This is condition (i) referenced in the statement of the theorem. In these cases, $U$ can act as a basis-controlled unitary controlled on the basis $\mathcal{A}$ or $\mathcal{B}$. 

\paragraph{Case 2 -- target : free} The action of the circuit $C$ can be written as a tensor product of one unitary that depends on the other control lines in some way, $W_x$, and a single qubit unitary that depends on no other qubits, $V$. I.e. $C = W_x \otimes V$. Then $UC$ can have a product eigenbasis if $U$ acts as $(I\otimes V)\UC{E}{A}{B}$ for any basis $\mathcal{E}$ and 2-qubit unitaries $A, B$. 

If condition (ii) is satisfied -- $W^\dag_x W_{x'}$ is diagonal in some basis $\mathcal{A}$ for all $x, x'$ -- then $U$ may also act as a basis-controlled unitary $\CU{A}{B}{C}$ for any 2-qubit unitaries $B, C$.  

\paragraph{Case 3 -- target : ambiguous} In this case, line $j$ has an associated basis $\mathcal{B}$. $UC$ has a product eigenbasis if $U$ acts as a basis-controlled unitary controlled on basis $\mathcal{B}$ (i.e. is of the form $\UC{B}{A}{B}$ for any 2-qubit unitaries $A,B$). 

If condition (iii) is satisfied -- $W^\dag_x W_{x'}$ is diagonal in basis $\mathcal{A}$ for all $x, x'$ -- then $U$ can act as a basis-controlled unitary controlled on basis $\mathcal{A}$ (i.e. is of the form $\CU{A}{B}{C}$ for any 2-qubit unitaries $B,C$). 
%\chris{If $B^\dag_x B_{x'} = I$ for all $x,x'$, then we are in case 2 above. }

\paragraph{Case 4 -- free : free} In this case, the circuit so far has acted as $C = W \otimes V$, for $W, V$ 2-qubit unitaries independent of the state of the other qubits. Then $UC$ has a product eigenbasis if $U$ acts as $(W^\dag \otimes V^\dag)$ followed by a basis-controlled unitary in any basis, controlled on either $i$ or $j$ (or both if it acts as a basis-controlled unitary from $\mathcal{P}$ or $\mathcal{S}$). 
\end{proof}
%%%%%%%%%%%%

%%%%%%%%%%%%%%%%%%%%%%%%%%%%%%%%%%%%%%%%%%%%%%%%%%%%%%%%%%%%%%%%%%%%%%%%%%
%%%%%%%%%%%%%%%%%%%%%%%%%%%%%%%%%%%%%%%%%%%%%%%%%%%%%%%%%%%%%%%%%%%%%%%%%%
% Conclusion
%%%%%%%%%%%%%%%%%%%%%%%%%%%%%%%%%%%%%%%%%%%%%%%%%%%%%%%%%%%%%%%%%%%%%%%%%%
%%%%%%%%%%%%%%%%%%%%%%%%%%%%%%%%%%%%%%%%%%%%%%%%%%%%%%%%%%%%%%%%%%%%%%%%%%

\section{Further remarks and future work}

In this work we have shown that the One Clean Qubit model without entanglement is classically simulable. This leaves open the larger question: are all (mixed state) quantum computers classically simulable without entanglement? We conjecture the following. 
\begin{conjecture}
Every uniformly constructed family of circuits without entanglement can be classically simulated. 
That is, for any $n$-qubit circuit $U = U_M\dots U_1$ composed of $M$ elementary gates such that the state $U_t\dots U_1 \ket{0^n}$ is separable for all $1\leq t\leq M$, it is possible to estimate the probability of measuring 0 on the first qubit classically in polynomial time, up to additive accuracy $1/\poly(n)$. 
\end{conjecture}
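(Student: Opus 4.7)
The plan is to exploit the fact that, for pure states, being separable across every bipartition is equivalent to being a full tensor product of one-qubit states, and then maintain an explicit product representation throughout the simulation.

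First I would establish the following structural lemma: if a pure state $\ket{\psi}$ on $n$ qubits is separable across every bipartition (the natural reading of ``no entanglement,'' consistent with how this paper treats $\DQC_\sep$), then $\ket{\psi}$ factors as $\ket{\phi_1}\otimes\cdots\otimes\ket{\phi_n}$. Separability across $\{1\}\mid\{2,\dots,n\}$ means $\proj{\psi} = \sum_i p_i \proj{\alpha_i}\otimes\proj{\beta_i}$; since $\proj{\psi}$ has rank one, each summand on the right must be proportional to it, forcing $\ket{\psi}=\ket{\phi_1}\otimes\ket{\chi}$ for some unit vectors $\ket{\phi_1}$ and $\ket{\chi}$. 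The residue $\ket{\chi}$ inherits separability across every bipartition of $\{2,\dots,n\}$, so by induction it is itself a full product.

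The simulation is then immediate. Store $n$ one-qubit state vectors $\ket{\psi_1},\dots,\ket{\psi_n}$, initialised to $\ket{0}$. For each elementary gate $U_t$, which acts on a constant number of qubits, tensor together the stored factors on those qubits, apply $U_t$, and decompose the resulting pure state back into one-qubit factors. The decomposition is well-defined by the previous paragraph: the full $n$-qubit state after $U_t$ is promised to be a full product, and the qubits untouched by $U_t$ still carry their old factors, so the residue on the targets must be a product. It can be carried out in constant time per gate by forming the reduced density matrix on each target qubit (which must have rank one) and reading off its eigenvector of eigenvalue $1$. After processing all $M$ gates, output $|\braket{0|\psi_1}|^2$. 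The running time is $\poly(n,M)$.

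The main obstacle I foresee is not computational complexity but the intended meaning of ``separable.'' Read as separability across every cut, the conjecture essentially reduces to the structural lemma above and becomes almost trivial. Read instead as separability only across some distinguished cut, the state need not be a full product and one would have to invoke Jozsa--Linden- or Vidal-style bounded-entanglement simulation on the non-trivial factor, with substantially more care. A secondary concern is numerical precision: if gate parameters are specified with finite precision, accumulated round-off could make the intermediate two-qubit states only approximately product, requiring a rounding step before extracting the factors. The broader mixed-state version of the conjecture alluded to in the introduction looks genuinely harder, since a fully separable mixed state can require exponentially many terms in any convex decomposition into product pure states, and it is not clear what efficient representation one should maintain in that setting.
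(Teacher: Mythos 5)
You should first note that this statement is a \emph{conjecture} in the paper: the authors give no proof, and in the surrounding discussion they explicitly say that even the much weaker containment $\BQP_\sep \subseteq \PostBPP$ (let alone $\BQP_\sep \subseteq \BPP$) ``appears to be difficult.'' So there is no proof of theirs to compare yours against, and the fact that your argument renders the statement nearly trivial should itself have been a warning that you have proved a different (weaker) reading than the one intended.

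For the literal, pure-state reading your argument is correct: a pure state separable across every bipartition is a full tensor product (the rank-one argument forcing each term of the convex decomposition to be proportional to $\proj{\psi}$ is standard and fine), and a circuit all of whose intermediate states are full products is simulable by carrying one single-qubit vector per wire --- this is just the zero-entanglement case of Jozsa--Linden. But this cannot be the intended content. The conjecture is introduced as a formalisation of the question ``are all \emph{mixed state} quantum computers classically simulable without entanglement?'', and the entire point of the paper is that the interesting regime is the one in which the state is a separable \emph{mixed} state, i.e.\ a convex mixture of product pure states, not a single product pure state. There your structural lemma gives nothing: a separable mixed state need not factor, its decomposition into product states is highly non-unique and may require exponentially many terms, and no polynomial-size classical representation to propagate through the circuit is known. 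This is precisely why the paper needs the machinery of its Sections~4 and~5 to handle even the special case $\DQC_\sep$, where the mixture ranges over the $2^n$ vectors of a product eigenbasis, and why the general statement remains open. Your own closing paragraph identifies this obstacle accurately --- but the proof you wrote does not engage with it. To claim the conjecture you would either have to argue that the pure-input formalisation is genuinely all that is being asserted (in which case you should present the observation as a triviality, not a resolution of the authors' open problem), or supply the missing idea for ensembles, which is the actual difficulty.
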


One method for disproving this conjecture (or rather, showing it to be very unlikely to be true) is to find a class of separable computations for which a `quantum supremacy' result can be proved. Such a result would state that no classical simulation (to multiplicative \cite{bremner2010classical} or additive \cite{aaronson2011computational,bremner2016average}) error can exist unless certain complexity theoretic conjectures are false. Multiplicative error results of this type usually rely on showing that post selection boosts the class to $\PBQP$\footnote{This is not the only method, see Ref \cite{morimae2014hardness,fujii2015power}.}. We conjecture this is not possible for a seperable class of circuits. It does not appear possible to create entangled states from separable ones, which seems necessary to achieve the power of $\PBQP$. Let $\mathsf{BQP}_\sep$ be the class consisting of problems solvable by (mixed state) quantum circuits without entanglement. Our conjecture is that
\begin{conjecture}
$\mathsf{PostBQP}_\sep = \PostBPP$.
\end{conjecture}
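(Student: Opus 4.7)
The inclusion $\PostBPP \subseteq \mathsf{PostBQP}_\sep$ is immediate: any classical probabilistic computation is realised by a quantum circuit that never leaves the computational basis, so its intermediate states are fully product and hence separable across every bipartition, and postselection is the same operation on either side. The substance is therefore the reverse inclusion $\mathsf{PostBQP}_\sep \subseteq \PostBPP$.

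For the hard direction the plan is to lift the program of this paper from $\DQC_\sep$ to an arbitrary $\mathsf{BQP}$ input. First I would seek the analogue of Theorem~\ref{thm:sep}: the requirement that every intermediate state $U_t \cdots U_1 \ket{0^n}$ be separable across every bipartition should force each prefix unitary to admit a product eigenbasis whose span covers the initial state. An induction paralleling Theorem~\ref{thm:prodcontrol} would then show that each elementary gate is forced to be a basis-controlled unitary, so the full circuit is a product control circuit in the sense of Section~4.1. With that structural theorem in hand, the classical simulator developed for $\DQC_\sep$ adapts essentially verbatim: sample a product eigenvector of $U$ from the implicit separable decomposition, evolve it gate-by-gate using the control/target/ambiguous/free rules of Table~\ref{table}, and output the appropriate basis-measurement expectation. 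Postselection requires no new tools, since every step of the simulation already lies in $\PostBPP$.

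The principal obstacle is the structural step itself. The $\DQC_\sep$ classification is anchored by the specific initial state $\proj{+}\otimes \mathbb{I}/2^n$, whose maximally mixed register exposes every eigenvector of $U$ and is what drives Lemma~\ref{lem:4} and Corollary~\ref{cor:2}. For a pure input $\ket{0^n}$ a circuit can produce separable intermediate states without the underlying partial unitary having any product eigenbasis at all, since $\ket{0^n}$ may simply miss the non-product eigenvectors. Ruling out -- or else directly simulating -- such ``accidentally separable'' computations is the crux of the conjecture. A secondary subtlety is the handling of measurements and discarded ancillas, which must not introduce entanglement through the partial trace; but the paper's heuristic that one cannot manufacture entanglement from separable states suggests that the postselection-to-$\PBQP$ route to quantum hardness is blocked here, so a fully classical simulation should indeed be achievable.
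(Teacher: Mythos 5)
The statement you are addressing is stated in the paper as a \emph{conjecture}; the paper offers no proof of it, and the authors explicitly remark that even the much weaker containment $\mathsf{BQP}_\sep \subseteq \PostBPP$ ``appears to be difficult, for much the same reason that showing $\mathsf{BQP}_\sep \subseteq \BPP$ appears to be difficult.'' Your proposal does not close this gap either. The easy inclusion $\PostBPP \subseteq \mathsf{PostBQP}_\sep$ is fine, but for the reverse inclusion you give only a plan, and you yourself correctly identify the step that fails. The structural machinery of the paper (Theorem \ref{thm:sep}, Lemma \ref{lem:4}, Corollary \ref{cor:2}, Theorem \ref{thm:prodcontrol}) hinges on the specific input $\proj{+}\otimes\mathbb{I}/2^n$: because the maximally mixed register assigns nonzero weight to \emph{every} eigenvector of $U$, separability of the output forces an entire product eigenbasis, and that is what drives the gate-by-gate classification. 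With a pure input $\ket{0^n}$, separability of each intermediate state constrains the prefix unitaries only on the one-dimensional subspace actually visited, so no product-eigenbasis condition is forced and the induction of Theorem \ref{thm:prodcontrol} has no analogue --- this is precisely the ``accidentally separable'' case you flag, and flagging it is not the same as handling it.

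Your closing appeal to the heuristic that one cannot manufacture entanglement from separable states is exactly the informal intuition the authors themselves offer in support of the conjecture; it is not an argument, and it does not address the real obstruction, namely that a separable mixed state can carry an exponential amount of classical correlation that no known classical simulator can track (this is why Vidal's algorithm requires bounded \emph{total} correlations, not just bounded entanglement). In short: the statement is open, the paper proves nothing about it, and your proposal is an honest research plan with the decisive step explicitly unresolved rather than a proof.
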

Our results imply that $\PDQC_\sep \subseteq \PostBPP$, and suggest that adding post-selection to a quantum computer lacking entanglement cannot increase its power beyond $\PostBPP$. However, even showing the (much) weaker containment $\BQP_\sep \subseteq \PostBPP$ appears to be difficult, for much the same reason that showing $\BQP_\sep \subseteq \BPP$ appears to be difficult.

\section*{Acknowledgements}

The authors would like to thank Florian Venn for coming up with Example \ref{flos}, which convinced us this problem was tractable after all. We also acknowledge Ashley Montanaro and Richard Jozsa for their helpful comments.

\bibliography{bibliography.bib} 

\end{document}